\def\corobts{CORoBTS}
\newcommand{\cO}{\mathcal{O}}
\newcommand{\T}{\mathcal{T}}
\newcommand{\D}{\mathcal{D}}
\renewcommand{\P}{\mathcal{P}}
\newcommand{\image}[2]{
	\begin{figure}[ht]\centering
		\includegraphics{#1.pdf}
		\caption{#2}
		\label{img:#1}
	\end{figure}
}
\patchcmd{\SOUL@ulunderline}{\dimen@}{\SOUL@dimen}{}{}
\patchcmd{\SOUL@ulunderline}{\dimen@}{\SOUL@dimen}{}{}
\patchcmd{\SOUL@ulunderline}{\dimen@}{\SOUL@dimen}{}{}
\newdimen\SOUL@dimen
\definecolor{modcolor}{HTML}{b3e6ff}
\def\«#1\»{\def\temp{#1}\ifx\temp\empty{\color{modcolor}\hrule height 0.5em}\else\hl{#1}\fi}
\def\hlv#1
\newenvironment{codelike}%
{%
\catcode`\
13%
\def
{\egroup\def
{\egroup\vbox\bgroup~}\vbox\bgroup~}%
\catcode`\	13%
\def	{{\tt~~}}%
\catcode`#13%
\def#{\color{blue}{\tt //}}%
\bgroup%
}{\egroup\vskip-1em}%
\def\fnn#1{\textsf{#1}}
\def\fn#1 {\fnn{#1}}    %
\def\nil{\fnn{null}}
\def\INS{\fnn{INSERT}}
\def\RM{\fnn{REMOVE}}
\def\f#1{\textrm{#1}}   %
\newenvironment{myalg}[2]
	{\begin{algorithm}\caption{#1}\label{#2}\begin{codelike}}
	{\end{codelike}\end{algorithm}}
\title{Cache-Oblivious Representation of B-Tree Structures}
\author{Lukáš Ondráček}%
	{Faculty of Mathematics and Physics, Charles University, Malostranské náměstí 25, 118 00, Prague, Czech Republic}%
	{ondracek@ktiml.mff.cuni.cz}%
	{https://orcid.org/0000-0002-1262-0229}
	{} %
\author{Ondřej Mička}%
	{Faculty of Mathematics and Physics, Charles University, Malostranské náměstí 25, 118 00, Prague, Czech Republic}%
	{micka@ktiml.mff.cuni.cz}%
	{https://orcid.org/0000-0003-3143-4955}%
	{} %
\authorrunning{L. Ondráček and O. Mička}
\keywords{Data structure, cache-oblivious, B-tree, (a,b)-tree, persistent data structure, persistent array, partial persistence}
\begin{document}
\maketitle

\begin{abstract}
We propose a general data structure CORoBTS
for storing B-tree-like search trees dynamically in a cache-oblivious way
combining the van Emde Boas memory layout with packed memory array.
The vEB layout is widely used for storing static such trees
or with an ad-hoc dynamization using PMA or other techniques.
We aim to simplify the design of future such data structures.

In the use of the vEB layout
mostly search complexity was considered, so far.
We show the complexity of depth-first search of a subtree and contiguous memory area
and provide better insight into the relationship between positions of vertices in tree and in memory.
We also describe how to build an arbitrary tree in vEB layout if we can simulate its depth-first search.
Similarly, we examine batch updates of packed memory array.

In CORoBTS,
the stored search tree has to satisfy that
all leaves are at the same depth
and vertices have arity between the chosen constants $a$ and $b$.
The data structure allows searching with an~optimal I/O complexity $\cO(\log_B{N})$
and is stored in linear space.
It provides operations for inserting
and removing a subtree;
both have an amortized I/O complexity
$\cO(S⋅(\log^2 N)/ B + \log_B N ⋅ \log\log S + 1)$
and amortized time complexity
$\cO(S⋅\log^2 N)$,
where~$S$ is the size of the subtree and $N$ the size of the whole stored tree.
Rebuilding an existing subtree
saves the multiplicative $\cO(\log^2 N)$ in both complexities
if the number of vertices on individual tree levels is not changed;
it is paid only for the inserted/removed vertices otherwise.

We combine the vEB memory layout with the packed memory array
the same way as in the cache-oblivious B-trees of Bender et al.~\cite{BDFC05}.
The difference in using depth-first search instead of parent pointers
would save one level of indirection in the B-trees.
Modifying cache-oblivious partially persistent array proposed by Davoodi et al. \cite{DFIO14}
to use CORoBTS improves its space complexity
from $\cO(U^{\log_2 3} + V \log U)$ to $\cO(U + V \log U)$,
where $U$ is the maximal size of~the array and $V$ is the~number of versions;
the data locality and I/O complexity of~both present and persistent reads are kept unchanged;
I/O complexity of writes is worsened by~a~polylogarithmic factor.
\end{abstract}

\newpage
\section*{Introduction}
\label{sect:intro}

The \textit{cache-oblivious memory model} designed by Frigo et al. \cite{FLPR99} is widely used
for developing and analyzing algorithms and data structures with respect to a multilevel memory hierarchy~\cite{handbook}.
In the model we have small \textit{cache} of size $M$ and possibly unlimited \textit{main memory}
both divided into \textit{blocks} of~size~$B$.
Each time, our program wants to access the main memory,
we have to load the whole block with needed data into cache unless it is already there.
We are interested in the number of those block transfers expressed asymptotically
in terms of $M$,~$B$ and input size, an \textit{I/O complexity};
accessing blocks in cache is for free.
The analyzed algorithm just uses data in the main memory without an explicit management of block transfers.
As the cache has limited size, newly loaded blocks evict blocks loaded earlier,
which are chosen according to the ideal strategy minimizing the number of transfers.
The ideal strategy is asymptotically as good as a \textit{Least Recently Used} strategy of real computers with $M$ doubled \cite{ST85,FLPR99}.
A~\textit{cache-oblivious} algorithm does not know the parameters $M$ and $B$
and so works on all levels of memory hierarchy simultaneously.
It is in contrast to \textit{cache-aware} algorithms,
which are optimized for~specific cache parameters.

Different cache-oblivious alternatives were proposed
to tree-based data structures.
For static search trees the well-known technique is to store the tree
in the \textit{van Emde Boas memory layout (vEB layout)} proposed by Prokop~\cite{Pro99},
where searching costs $\cO(\log_B N)$ memory transfers.
To obtain dynamic search trees, the vEB layout may be combined with a \textit{packed memory array (PMA)} described by~Bender et al.~\cite{BDFC05}
based on the work of Itai et al.~\cite{IKR81}.
It~stores ordered elements in a consecutive memory interleaved with~spaces of~total linear size
so that the data locality is not worsened asymptotically and
insertions and removals of elements require
only $\cO(1+(\log^2 N)/B)$ amortized memory transfers per operation.
Recently, Bender et al.~\cite{BCFCKKW22} proposed a randomized version of PMA lowering the~exponent of the logarithm to~1.5.
Other PMA alternatives were also proposed~\cite{BDF02, BH07, BFCK06, LB19}.
We keep using the well-known version of Itai in the main part of our paper
as it is simpler and more illustrative
and discuss the~alternatives in Section~\ref{sect:variants}.

In a cache-oblivious B-tree data structure~\cite{BDFC05}
all vertices of the search tree are stored in~a~PMA ordered by the vEB layout.
In other dynamic dictionaries~\cite{BDIW04}, only the data is kept in~the~PMA and a static index tree in vEB layout is built over the PMA.
Similar techniques are used in trees based on exponential structures~\cite{BCR02}.
Brodal et al. embedded the dynamic search tree into a static tree in vEB layout~\cite{BFJ02}.
Umar et al.~\cite{UAH13} proposed a~more complex search tree data structure utilizing the vEB layout,
which adds support for concurrency.
All~of~these data structures come up with their own balancing strategies,
which cannot be easily replaced to utilize the tree structure for other purposes
such as storing multidimensional keys without total ordering.
In some cases, indirection is used eliminating the tree structure completely~\cite{BDFC05,BDIW04,BFJ02}.
Saikkonen and Soisalon-Soininen created a dynamic memory layout for any binary search tree using rotations~\cite{SSS16},
which is however only cache-aware.

\bigskip

The cited papers analyze only the search complexity of the vEB layout
and we are not aware of papers with more in depth analysis.
We provide better insight into the relationship between positions of vertices in the tree and in memory
in the vEB layout
by describing how a contiguous part of memory looks in the tree
and how a subtree is divided into several contiguous parts of memory.
We show that the complexity of depth-first search of a subtree is the same as of sequential scan, which is optimal;
DFS of a contiguous memory area is higher just by a constant number of branch traversals.
We provide a simple algorithm for navigating during the DFS to visit just
vertices in the given contiguous memory and their ancestors.
We also describe how to build an arbitrary tree in vEB layout using DFS if we can simulate DFS of the resulting tree;
the complexity is the same as of DFS traversal of the existing tree.
Similarly, we examine batch updates of packed memory array.

We propose a data structure for
Cache-Oblivious Representation of B-Tree Structures, abbreviated as \corobts{}.
It provides direct access to the tree
and allows managing its~shape
by~inserting, removing and rebuilding subtrees,
provided that these operations preserve the standard B-tree invariants;
i.e., arity of internal vertices is bounded by the chosen integer constants $a<b$
and all leaves are at the same depth.

Searching in the tree costs $\cO(\log_B N)$ memory transfers, which is I/O optimal.
Inserting or removing a subtree of size $S$ has amortized I/O complexity
$\cO(S⋅(\log^2 N)/ B + \log_B N ⋅ \log\log S + 1)$.
Rebuilding a subtree avoids the multiplicative $\cO(\log^2 N)$ as long as
the number of vertices on individual levels in the subtree is not changed;
in general, the complexity is $\cO(S/B + K⋅(\log^2 N)/B + L⋅\log_B N + 1)$,
where $L$ is the number of levels with changed number of vertices
and $K$ the sum of changes of number of vertices on that levels.
The space usage is linear in the number of vertices.

The idea behind the data structure is similar to that of the cache-oblivious B-trees of~Bender~et~al.~\cite{BDFC05}:
we store vertices ordered by the vEB layout
interleaved with spaces in a PMA.
As modifications involve movements of existing vertices in memory
we need to recalculate pointers between them.
In contrast to Bender's B-trees, we do not store pointers to~parents in the tree;
this way, pointer updates have lower I/O complexity.
In particular, updating pointers leading to the contiguous memory of size $K$ originally costs $O(K)$ memory transfers,
because vertices under that area in the tree may be spread in different blocks;
our depth-first search visiting all ancestors but no descendants of the area costs only $O(K/B + \log_B N + 1)$ memory transfers.
The final complexities of Bender's B-trees and \corobts{} cannot be compared as the interface is totally different,
but if we use \corobts{} inside Bender's B-trees,
we can reach the same complexity using just one level of indirection instead of two.

Since we defined \corobts{} as a standalone data structure with general interface
for accessing and modifying the tree
it may avoid the need of ad-hoc constructions combining vEB and PMA in the future.
Using it as a black box will subsequently simplify description of derived data structures
and possible library implementation also putting them into practice.

We illustrate the future simplification of descriptions on one more detail of the Bender's B-trees.
They need to perform standard B-tree splits and merges of vertices
and they precisely describe how to reorder vertices in memory during that operation according to the strict vEB ordering.
As part of \corobts{}, we describe the general rebuild procedure,
which can be directly used while keeping the complex part hidden inside.
We only have to describe how to simulate DFS of the tree with one vertex split (or two siblings merged)
with access to the original tree, which is straightforward, and we get asymptotically the same complexity.

\bigskip

Our original motivation for the \corobts{} data structure
is to use it as part of the cache-oblivious partially persistent array
proposed by Davoodi et al. \cite{DFIO14} to improve its space complexity.
The persistent array can be used like an~ordinary array,
which in addition stores all its previous versions and allows their reading.
We can write only to the last (\textit{present}) version,
which causes creation of~a~new version.
By~storing another, \textit{ephemeral}, cache-oblivious data structure inside the persistent array,
it becomes persistent too
and~it can still benefit from its cache-oblivious design
as~the~persistent array tries to~preserve data locality.
We note that persistent arrays were studied earlier by Dietz~\cite{Die89},
but not in the context of cache-oblivious model;
we are not aware of any other cache-oblivious persistent array than the above-mentioned one.

Internally, the persistent array of Davoodi et al. uses (2,3)-ary search trees,
where each internal vertex represents a subarray,
its two original children further partition the subarray
and its optional third child stores the newer version of one of its siblings.
Each such tree contains at most $\cO(U\log U)$ vertices,
where $U$ is the size of the present array.
In~spite of~it, the newest such tree always occupies much larger space~$Θ(U^{\log_2 3})$.
If we store the trees in~\corobts{},
they will occupy just the~linear space in number of their vertices
and the total space usage of~the~persistent array
decreases from $\cO(U^{\log_2 3} + V\log U)$
to $\cO(U + V\log U)$, where $V$ is the~number of versions.

The I/O complexities of the persistent array are analyzed
w.r.t.~a given \textit{ephemeral} algorithm running in the persistent memory.
It executes a sequence of reads or writes that would require $T(M',B')$ block transfers when executed ephemerally
on an initially empty cache with smaller size $M'$ and smaller block size $B'$.

The original I/O complexities of present read, persistent read and write are
$\cO(T(M/3, B))$,
\[
	\cO\left(T\left(\frac{\frac{1}{3}M}{B^{1-\frac{1-ε}{\log 3}}(1+\log_B U)}, B^\frac{1-ε}{\log 3}\right)⌈\log_B U⌉\right),
	\cO\left(T\left(\frac{\frac{1}{3}M}{B^{1-\frac{1-ε}{\log 3}}\log B}, \frac{B^\frac{1-ε}{\log 3}}{\log B}\right)\log_B U\right),
\]
respectively,
as stated in \cite{DFIO14},
where $0 < ε < 1$ is a parameter of the data structure
which can be set arbitrarily.

In the persistent array using the \corobts{} data structure,
the I/O complexity of reads is kept unchanged as well as the data locality;
the I/O complexity of writes is
\[
	\cO\left(\frac{1}{ε}⋅T\left(\frac{εcM}{B^{1-\frac{1-ε}{\log 3}}\log U}, \frac{B^\frac{1-ε}{\log 3}}{\log B}\right)\log_B U⋅\log^2 U\right)
\]
for $0<ε≤1/2$ and a sufficiently small constant $c$.
The multiplicative $\log^2 U$ term is the overhead of PMA
and the memory of the internal algorithm had to be lowered by a factor of $(\log_B U)/ε$ to avoid some external pointers leading inside the tree managed by CORoBTS.
We note that the constant $c$ is a little lower than in the original paper
and the leading constant $1/ε$ was also not present there.

\section{Overview}
\label{sect:overview}

\paragraph*{Terminology}
The trees used in this paper are rooted and the order of children is significant.
We define the \textit{depth}~$d(v)$ of a vertex $v$ in a tree as the length of the path from the root to the vertex;
the~depth of the root is zero.
The \textit{level} $k$ of a tree is the set of all its vertices with depth $k$.
The \textit{height} $h(T)$ of~a~tree $T$ is the~number of its non-empty levels.
A \textit{branch} of a tree is a~path from its root to~a~leaf.

We need to distinguish three different orderings of vertices in trees.
We say that a vertex is to~the~\textit{left/right} of another one to refer to the ordering on the same level in the tree.
We use \textit{up/down} ordering on a branch.
When speaking about vertices stored in memory, we use \textit{before/after} to~refer to their memory location.
We also use \textit{memory-first} and \textit{memory-last} to speak about the~first or last vertex in a given area of memory.
Furthermore, we adopt a left/right ordering of~branches from their leaves, when they are on the same level,
and a~vertex is to~the~left/right of a branch
if it is to the left/right of the vertex on the same level of that branch.

\paragraph*{\corobts{} Interface and Invariants}
The \corobts{} data structure maintains a search tree $\T$ satisfying the following invariants:
\begin{description}
	\item[Degree invariant:] Every internal vertex has between $a$ and $b$ children
		for constants $2 ≤ a <b$.
	\item[Height invariant:] All leaves are on the same level.
\end{description}

We denote by $N = |\T|$ the number of vertices of $\T$.
The height of the tree, denoted by $H = h(\T)$, is fixed and can be changed only by a~complete rebuild.
The data stored in vertices can be only of constant size
and the tree has to be \textit{searchable},
which means that we can navigate from the root to a given vertex using only the data
stored in the~vertices on~the~path and $\cO(1)$ of other data.

The data structure provides a \textit{direct access} to the tree,
so user can traverse it and perform searches as usual
using pointers to children in vertices;
pointers to parents are not stored.
Along with searching, we allow updates by inserting and removing whole subtrees
and possibly rebuilding existing subtrees.
\textit{Inserting a subtree} as the $c$-th child of a vertex $v$
in its basic form means
creating an $a$-ary subtree under $v$ of height forced by the height invariant.
The input of this operation is just $v$ and $c$.
After the subtree creation,
the user fills the vertices with their data -- so far only children pointers have been set.
In the general version, we can insert an arbitrary subtree
provided that we can simulate its left-to-right depth-first search
and the invariants are not violated.
\textit{Removing a subtree} rooted at the $c$-th child of a vertex $v$ just destroys that subtree.
\textit{Rebuilding a subtree}
allows arbitrary changes in its structure according to the simulated depth-first search,
but more efficiently than by removing the subtree and inserting another one;
we may also rebuild several siblings subtrees into different number of subtrees without touching others.

We store vertices ordered in memory by the van Emde Boas memory layout
to allow efficient traversal of the tree in terms of I/O complexity.
We use packed-memory array to maintain holes between vertices in memory
in order to allow efficient insertions and removals of vertices in their fixed position in the sequence.

\paragraph*{Van Emde Boas Memory Layout}
The van Emde Boas memory layout was originally proposed by Prokop~\cite{Pro99}.
We use a version parametrized by a constant $0 < ε ≤ 1/2$.
It guarantees
that a branch of the tree $\T$ is contained within $\cO((\log_B |\T|)/ε + 1)$ cache blocks,
and so searching in the tree has this I/O complexity.
The~time complexity stays linear in the length of the traversed path.
We formally define the layout and a tree decomposition $\D$ based on it
in Section~\ref{sect:veb}
and prove there the search I/O complexity and other properties of the layout.

We note that the configurable parameter $ε$ is required by the persistent array,
where it gets into an exponent of some terms of the I/O complexity.
In \corobts{} itself, the value of $1/2$ is optimal.
We keep $ε$ in asymptotic complexity even as the multiplicative constant
to better illustrate its impact there
since it can be arbitrarily chosen without restricting the interface of the data structure.
On the other hand, we do not use $a$ and $b$ multiplicatively,
as they are forced by the specific application.

In Section~\ref{sect:veb_theory},
we describe the relationship between positions of vertices in the memory and in the tree.
We show how a contiguous memory interval looks in the tree,
e.g. that it may be divided into at most two specific distant parts in the tree,
and that a subtree of size $S$ may be divided into at most $\cO(\log\log S)$ memory intervals,
which can be easily located by just two branch traversals.

In Section~\ref{sect:veb_dfs},
we proof that a left-to-right depth-first-search of a subtree
has the optimal complexity $\cO(S/B + ε^2)$ if we begin at the subtree root.
The complexity of DFS of all vertices in a memory interval and their ancestors is by the search complexity higher
as there may be possibly longer paths connecting root with the two separate subtrees containing the memory interval;
it can be expressed as $\cO(S/B + (\log_B N)/ε + 1)$.
We describe there an algorithm for navigating during the DFS of a memory interval to visit just the necessary vertices.

In Section~\ref{sect:veb_build}, we show how to use DFS to build a fixed-arity subtree in a reserved memory.
We provide the algorithm for building the general subtree according to its simulated DFS in Section~\ref{sect:veb_general_build}.
The complexity of subtree build is in both cases the same as the complexity of subtree DFS.

\paragraph*{Packed-Memory Array}
The \textit{packed-memory array} (PMA) is
based on~a~work of~Itai et al.~\cite{IKR81}
as described by Bender, Demain and Farach-Colton \cite{BDFC05} in the context of cache-oblivious model.
PMA stores a sequence of $N$ elements in an~array of~size~$cN$ for~$c > 1$, subject to operations
insert and delete.
Both of them have amortized I/O complexity $\cO((\log^2 N)/B + 1)$
and amortized time complexity $\cO(\log^2 N)$ if we have
a~pointer to the predecessor (insert) or the~element (delete).
Moreover, insert or delete of~$S$~consecutive elements have amortized I/O complexity $\cO(S⋅(\log^2 N)/B + 1)$.
The important property of~PMA is that any $S$ consecutive elements are stored within $\cO(S)$ consecutive cells of~the~array,
and so scanning elements in~a~PMA has the same asymptotic complexity as if the~elements were stored in~an~ordinary array.
We describe the original PMA in Section~\ref{sect:original_pma}.

Inserting or removing an item causes PMA to move all items within a consecutive part of the memory, \textit{memory interval}.
As we are storing a tree inside PMA, this would normally break pointers connecting its vertices.
To avoid it, we modify PMA so that it first identifies the memory interval and computes where each vertex will be moved,
then calls our procedure for updating pointers leading to these vertices,
and finally performs the movements.

We also need to minimize the number of calls to our pointer recalculation procedure as it brings additive terms to the I/O complexity.
For this reason, we need our PMA to process all the insert/remove operations as one batch update,
so that multiple insertions/removals at one place (or close enough) result in just one memory interval for pointer recalculation.
We describe our version of PMA in Section~\ref{sect:our_pma}.

We call $\P$ our instance of PMA
and we use it also for referring to the~underlying array since the array is the only memory used by the PMA.

\paragraph*{Pointer Recalculation}

To update the pointers leading to the memory interval $P$ given by PMA
from the parents of the contained vertices,
we use DFS mentioned above, which has the complexity $\cO(|P|/B + (\log_B N)/ε + 1)$.
It is almost as good as the sequential scan of the memory interval, which has the I/O complexity of $\cO(|P|/B)$.
The sequential scan occurs during PMA update, so we can hide this term in the amortized complexity.
We note that avoiding parent pointers allowed us to skip visiting children of vertices in $P$,
which might be scattered in up to $Θ(|P|)$ cache blocks.

The amortized I/O complexity of modified PMA is $\cO(S⋅(\log^2 N)/B + L⋅(\log_B N)/ε + 1)$,
where $S$ is the total number of vertices to insert/remove
and $L$ is the number of consecutive places where the insertions/removals occur
and so the upper bound on the number of calls of DFS.
We present the proof in Section~\ref{sect:corobts_recalc}.

\paragraph*{Subtree Insertion and Removal}

The complexity of inserting or removing a subtree is
$\cO(S⋅(\log^2 N)/B + (\log_B N ⋅ \log\log S)/ε + 1)$,
as the main part is the PMA complexity
and the subtree may be divided into at most $\cO(\log\log S)$ memory intervals;
the fixed-arity or general build is also bounded by the complexity.
See Section~\ref{sect:corobts_insert_remove}.

\paragraph*{Subtree Rebuild}

To rebuild a subtree of size $S$,
we use the fact that its vertices can be arbitrarily permuted in the memory without getting PMA involved,
which saves the $\cO(\log^2 N)$ factor.
The memory of the subtree may however be partitioned into multiple memory intervals, each containing several levels of the subtree;
changing the number of vertices in them costs $\cO(K⋅(\log^2 N)/B + L⋅(\log_B N)/ε + 1)$,
where $K$ is the sum of changes of number of vertices in each memory interval
and $L≤K$ is the number of memory intervals where inserts/removals occurred;
we can upperbound the complexity by counting each subtree level as an individual memory interval.
See Section~\ref{sect:corobts_rebuild} for details.

\paragraph*{Initialization}

We can initialize the data structure either as an $a$-ary tree
or using the general build procedure from simulated DFS.
In both cases,
we can use the worst-case PMA complexity $\cO(N/B + 1)$ as its potential will be zero after the build.
The whole initialization is then bounded by the complexity of the build, i.e. $\cO(N/B + ε^2)$.

\bigskip
We provide the \corobts{} data structure with operations of inserting an $a$-ary subtree and removing subtree
in a form of pseudocodes in Appendix~\ref{sect:appendix_algorithms}
and discuss different variants of \corobts{} in Section~\ref{sect:variants}.
Our application in persistent array is described in Section~\ref{sect:persistent_array}.

\section{Van Emde Boas Memory Layout}
\label{sect:veb}
We store vertices in memory ordered by the so called van Emde Boas memory layout~\cite{Pro99}.
The~main idea is that we cut the tree in a fraction of its height
and recursively store its upper subtree followed by all its bottom subtrees.
We require that the upper subtree is always lower than or of same height as the bottom subtrees
and neither of them is empty.
Formally, we use the~following definition, parametrized by $ε$:

\begin{definition}[van Emde Boas permutation]\label{def:van_emde_boas}
	For a tree $T$ with all leaves at the same depth and $0<\varepsilon≤1/2$,
	we recursively define $\f{vEB}_{\varepsilon}(T)$ to be a permutation of vertices of $T$ s.t.:
	Unless height $h$ of $T$ is 1 (and thus the permutation is unique),
	we cut all edges between levels $\max(\lfloor\varepsilon h\rfloor,1) -1$ and $\max(\lfloor\varepsilon h\rfloor,1)$
	and denote the resulting subtrees $R_0, \dots, R_k$,
	where $R_0$ is the \textit{top tree} containing the root of $T$
	and $R_1, \dots, R_k$ are the remaining \textit{bottom trees} ordered left-to-right.
	Then $\f{vEB}_\varepsilon(T) := \f{vEB}_\varepsilon(R_0), \f{vEB}_\varepsilon(R_1), \dots, \f{vEB}_\varepsilon(R_k)$.
\end{definition}

We further define \textit{decomposition} $\D$ of $\T$ as the tree of the recursion in the $\f{vEB}_\varepsilon(\T)$ definition
identifying vertices of $\D$ with the corresponding subtrees of $\T$,
which we call \textit{decomposition subtrees}.
The root of $\D$ is then $\T$ and the~leaves are all one-vertex subtrees of $\T$ ordered as in memory.
See Figure~\ref{img:figure1}.

We usually denote vertices of $\T$ by $u,v,w$ and
we denote subtrees of $\T$ and vertices of $\D$ by $T,R,S$,
possibly combined with subscripts.
We note that it is important to differentiate between height as a tree property and depth as a vertex property.

\image{figure1}{
	The tree $\T$ on the left and its decomposition $\D$ on the right.
}

\begin{lemma}[search I/O complexity]\label{lem:search_complexity}
	Traversing a contiguous part of a branch of length $s$ has I/O complexity $\cO(s/(ε⋅\log B) + 1)$.
	Traversing a branch of a decomposition subtree $T$ has I/O complexity $\cO((\log_B |T|)/ε + 1)$.
\end{lemma}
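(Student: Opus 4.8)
The plan is to analyze the van Emde Boas layout recursively, exploiting the fact — guaranteed by Definition~\ref{def:van_emde_boas} — that when we cut a tree of height $h$ at level $\max(\lfloor\varepsilon h\rfloor,1)$, the top tree has height roughly $\varepsilon h$ and every bottom tree is laid out contiguously in memory immediately after it. First I would fix a cache parameter $B$ and define $h^*$ to be (roughly) the largest recursion level at which a decomposition subtree still has height $\le \log_2 B$; such a subtree has at most $B$ vertices (arity is $\ge 2$, but the relevant bound is that a tree of height $\le\log_2 B$ with arity $\le b$ has $\le b^{\log_2 B}$ vertices — so I'd instead pick $h^*$ so the vertex count is $\Theta(B)$, i.e. height $\Theta((\log B)/\log b) = \Theta(\log B)$ since $b=O(1)$). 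By the contiguity property of the layout, all vertices of such a subtree lie within $O(B)$ consecutive cells of $\P$ (here we also use the remark at the start of Section~\ref{sect:search_complexity} that PMA holes are only constant-sized), hence occupy $O(1)$ blocks.

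For the first claim, consider a continuous part of a branch of length $s$. I would walk down the branch and group the visited vertices according to which ``$B$-sized'' decomposition subtree (of height $\Theta(\log B)$) they fall into along the recursion. A branch of the whole tree passes through a nested chain of such subtrees, and within any one of them the portion of the branch has length at most its height, i.e. $O(\log B)$; more precisely, because each cut at height $h$ produces a top part of height $\ge \lfloor\varepsilon h\rfloor \ge \varepsilon h - 1$, the height of the relevant subtrees is $\Theta(\varepsilon^{-1}\log B)$ — wait, that goes the wrong way. Let me instead argue directly: pick the threshold so that a decomposition subtree of height exactly $\Theta(\log B)$ has $O(B)$ vertices and thus fits in $O(1)$ blocks. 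A branch segment of length $s$ meets at most $O(s/\log B + 1)$ such subtrees as it descends (each contributes a sub-branch of length up to $\Theta(\log B)$, except possibly truncated ones at the two ends), and reading each costs $O(1)$ block transfers once we enter it. The $\varepsilon$ enters because the recursion depth needed to isolate a subtree of height $\Theta(\log B)$, starting from height $H$, is governed by iterating $h \mapsto h - \lfloor\varepsilon h\rfloor$, and one has to be slightly careful that the height of the isolated pieces is within a constant factor of $\log B$ rather than $\varepsilon\log B$ or $(\log B)/\varepsilon$; tracking that the ``$\varepsilon$-cut'' only shaves a $(1-\varepsilon)$ fraction off the height per level gives that the granularity of subtrees at the right recursion depth is $\Theta(\log B)$ up to a factor depending on $\varepsilon$, yielding the $s/(\varepsilon\log B)$ term. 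Summing, I get $O(s/(\varepsilon\log B) + 1)$ transfers.

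For the second claim, a branch of a decomposition subtree $T$ has length $h(T) = O(\log_b |T|) = O(\log |T|)$ since arity is $\Theta(1)$. Plugging $s = O(\log|T|)$ into the first claim gives $O\big((\log|T|)/(\varepsilon\log B) + 1\big) = O\big((\log_B|T|)/\varepsilon + 1\big)$, using $\log_B|T| = (\log|T|)/(\log B)$. This step is essentially a substitution once the first claim is in hand.

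The main obstacle is the first claim, and specifically getting the constant bookkeeping right: I must choose the recursion level at which I stop so that (i) the isolated decomposition subtrees have $O(B)$ vertices — hence fit in $O(1)$ blocks by the contiguity of the vEB layout plus constant-sized PMA holes — and simultaneously (ii) their height is $\Omega(\log B / \varepsilon^{c})$ or at least $\Omega(\varepsilon\log B)$ so that a branch segment of length $s$ crosses only $O(s/(\varepsilon\log B)+1)$ of them. The delicate point is that one cut reduces height from $h$ to about $h-\varepsilon h = (1-\varepsilon)h$ for the top tree, so to reach height $\Theta(\log B)$ one needs $\Theta(\varepsilon^{-1}\log(H/\log B))$ levels, and one must verify that the heights of the subtrees encountered along a single branch at that level do not fluctuate by more than a constant factor — which follows because consecutive cuts along a branch only ever remove a $\le\varepsilon$ fraction, so the last full subtree before we drop below height $\log B$ still has height $\ge(1-\varepsilon)\log B = \Omega(\varepsilon\log B)$ for $\varepsilon\le 1/2$. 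Once that is pinned down, both bounds fall out of a clean recursive count.
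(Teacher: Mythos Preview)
Your approach is the same as the paper's: identify, along the branch, the maximal decomposition subtrees of height at most some threshold $k=\Theta(\log B)$ so that each has $O(B)$ vertices and hence lies in $O(1)$ blocks, then count how many such subtrees a path of length $s$ crosses. The second claim is indeed just a substitution of $s=O(\log|T|)$ into the first.

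The gap is in your lower bound on the height of these subtrees. You write that ``consecutive cuts along a branch only ever remove a $\le\varepsilon$ fraction'', and from this deduce height $\ge(1-\varepsilon)\log B$. That claim is false: the cut of a tree of height $h$ at level $\max(\lfloor\varepsilon h\rfloor,1)$ produces a \emph{top} tree of height $\max(\lfloor\varepsilon h\rfloor,1)$ --- so a $(1-\varepsilon)$ fraction is removed there, not $\varepsilon$ --- and bottom trees of height $h-\max(\lfloor\varepsilon h\rfloor,1)$. A branch passes through both the top piece and one bottom piece, so you cannot avoid the top case. If the parent subtree has height just above $k$, the top child may have height only about $\varepsilon k$, not $(1-\varepsilon)k$.

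The fix is simpler than your recursion-depth tracking. For each vertex $v$ on the path, take the \emph{maximal} $T_v\in\D$ containing $v$ with height $\le k$. Its parent in $\D$ has some height $h>k$, and every child of a node of height $h$ in $\D$ (top or bottom) has height at least $\max(\lfloor\varepsilon h\rfloor,1)$, since the top tree is the shortest for $\varepsilon\le 1/2$. Hence $h(T_v)\ge\max(\lfloor\varepsilon(k+1)\rfloor,1)=\Omega(\varepsilon k)$. This is the $\Omega(\varepsilon\log B)$ bound you want, reached by a one-line argument rather than an iterated height reduction; once you have it, the segment count $O(s/(\varepsilon k)+1)$ and the rest of your outline go through exactly as in the paper.
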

\begin{proof}
	Let $k$ be maximal such that any subtree of height at most $k$ contains at most $B$ vertices.
	Since our tree is at most $b$-ary, we have $k ∈ Ω(\log_b B)$.
	Let us assign a tree $T_v∈\D$ to each vertex $v$ of the traversed path
	so that the height of $T_v$ is highest possible but at~most~$k$.
	We split the traversed path into segments each sharing the same tree $T_v$.

	The length of each segment except for possibly shorter first and last
	is the same as height of~its subtree $T_v$,
	which is at least $\max(\lfloor(k+1)⋅\varepsilon\rfloor,1)$.
	The number of segments is thus at~most
	$\cO(s/εk+1) ⊆ \cO(s/(ε⋅\log_b B) + 1)$.

	Finally, we prove that traversal of each segment requires at most $\cO(1)$ memory transfers:
	By choice of $k$, $|T_v| ≤ B$ for each $T_v$.
	From the definition of the decomposition, each $T_v$ is stored contiguously in memory (except for PMA holes), occupying at
	most $\cO(|T_v|)$ memory cells.
	Traversing a segment then requires only $\cO(|T_v|/B + 1) ⊆ \cO(1)$ memory transfers. 

	This gives us the total I/O complexity $\cO(s/(ε⋅\log_b B) + 1)$, which can be upperbounded
	by~$\cO((\log_B |T|)/ε + 1)$ for a branch of a decomposition subtree $T$, since the height of~the~subtree is $\cO(\log_a |T|)$.
\end{proof}

\subsection{Position of vertices in memory and in tree}
\label{sect:veb_theory}

Here, we present several properties of vEB permutation,
which we later use to find all vertices of a memory interval in tree during depth-first-search,
to find memory intervals occupied by a subtree
and to prove complexities of our algorithms.

\begin{observation}[order of vertices in memory]\label{obs:memory_order}~
	\begin{enumerate}[(a)]
		\item Vertices on a branch are stored ordered from root to leaf.
		\item Vertices of one level in $\T$ are stored ordered from left to right.
	\end{enumerate}
\end{observation}

\subsubsection{Memory interval in tree}

We partition a set of vertices by branches as follows:
\begin{definition}[tree partitioning]\label{def:tree_partitioning}
	Let $B_1, \dots, B_k$ be branches of a tree $T$ ordered left-to-right and $U$ a subset of vertices of $T$, s.t. $∀i\;B_i ∩ U = ∅$.
	For $i ∈ \{0, \dots, k\}$ let $W_i$ be the set of~all vertices of $T$
	which are to the right of $B_i$ (if $i>0$) and to the left of $B_{i+1}$ (if $i<k$),
	let~$U_i = U ∩ W_i$.
	If all $U_i$ are non-empty,
	we say that $U$ \textit{can be partitioned} by branches $B_1, \dots, B_k$
	and $\{U_i\mid i∈\{0, \dots, k\}\}$ is its \textit{partitioning}.
\end{definition}

\begin{lemma}[memory interval in tree]\label{lem:interval_in_tree}
	Given a set A of all vertices stored in a contiguous part of memory, the following holds:
	\begin{enumerate}[(a)]
		\item For each branch $B$: $A ∩ B$ induces a (possibly empty) subpath.
		\item For each level $L$ of vertices in tree order, $A ∩ L$ is a contiguous subsequence of L.
		\item Considering sequence $B_1, \dots, B_k$ of all branches intersecting $A$ in left-to-right order,
			then the sequence of depths of bottommost vertices of these branches contained in A is non-increasing,
			formally $(\max_{v ∈ B_i ∩ A} d(v))_{i=1}^k$ is non-increasing.
		\item Similarly to (c), the sequence of depths of topmost vertices, i.e. $(\min_{v ∈ B_i ∩ A} d(v))_{i=1}^k$, is non-increasing.
		\item If $A$ can be partitioned into two parts by a non-intersecting branch,
			then all vertices from the right component are in memory before all vertices from the left component.
		\item A can be partitioned into at most two components by any set of branches.
	\end{enumerate}

	Given a branch $B$ and disjoint memory intervals $A_1, \dots, A_k$ not intersected but partitioned by~$B$, then:
	\begin{enumerate}[(g)]
		\item Each $A_i$ lies in memory between a vertex of $B$ and its parent and
			for each vertex of $B$ and its parent, at most one $A_i$ lies between them in memory.
	\end{enumerate}
\end{lemma}
\begin{proof}~
	\begin{enumerate}[(a)]
		\item
			By Observation~\ref{obs:memory_order}~(a), the order of vertices on branch and in memory is the same,
			therefore memory interval induces branch subpath.
		\item
			By Observation~\ref{obs:memory_order}~(b), the order of $L$ in tree and in memory is the same.
		\item Let $u,v ∈ A$ and $w$ be vertices s.t. $u$ and $w$ are on the same branch, $d(u) < d(w) = d(v)$ and $v$ is to the right of $w$ in tree;
			then by Observation~\ref{obs:memory_order} $w$ is between $u$ and $v$ in memory, thus $w ∈ A$.
		\item Analogous to (c).
		\item
			We prove that for any pair of memory consecutive vertices $u$, $v$ (in this order) s.t.
			each of them is in different component, $u$ is to the right of $B$ and $v$ to the left of $B$.
			It thus follows that there can be only one such memory transition
			and so each vertex before $u$ is in the right component
			and each vertex after $v$ in the left one.

			Let $u$, $v$ be the pair of vertices as defined above.
			The two one-vertex subtrees of vertices $u$ and $v$ on the bottommost level of $\D$ are next to each other,
			since leaves in $\D$ are ordered as in the memory.
			Let us call their sibling ancestors (children of the lowest common ancestor) $T_u$, $T_v$.
			Both $T_u$ and $T_v$ are stored contiguously in memory and
			the~order of~$u$,~$v$ implies that $T_u$ is just before $T_v$.
			The vertex $u$ is at the end of $T_u$ and $v$ at the beginning of $T_v$;
			thus $u$ is the rightmost leaf of $T_u$ and v is the root of $T_v$.

			From the decomposition it follows that
			the roots of subtrees $T_u$ and $T_v$ either are on the~same level in $\T$,
			or the leftmost leaf of $T_u$ is parent of $T_v$.
			See the two cases in~Figure~\ref{img:figure2} without extension.

			In the former case,
			$u$ is in the left component while $v$ is in the right one.
			The rightmost branch containing $u$ is, however,
			just to the left of the leftmost branch containing $v$;
			no separating branch is therefore present and
			this case cannot happen.
			In the latter case,
			$v$~is in the left component and $u$ in the right one and
			separating branch B crosses~$T_u$ between $v$ and $u$.
			This is the only possible case.
		\item
			Keeping our partitioning of $A$ from (e) as well as definitions of $u$, $v$, $T_u$, $T_v$ and $B$,
			let us assume for contradiction that there exists another branch $B'$ partitioning W.L.O.G. the right component.
			We now have $v$ in the left component, $u$ in the middle component,
			and there exists another pair of memory-consecutive vertices $u'$, $v'$ s.t.
			$u'$ is in the right component and $v'$ in the middle component (possibly $u = v'$).
			The vertex $u'$ is separated from $u$ by $B'$, and so it cannot be in $T_u$,
			because $u$ lies on the rightmost branch of $T_u$ (see extended second case of Figure~\ref{img:figure2}),
			but still $u'$ is before $u$ in memory and so it is before the whole $T_u$;
			therefore the whole $T_u$ is contained in $A$,
			which contradicts with $B ∩ A = ∅$.
		\item
			Let $w ∈ B$ and $w' ∈ B$ parent of $w$.
			By Observation~\ref{obs:memory_order}~(a), $w'$ is before $w$ in memory and there are no other vertices from $B$ between them.
			Each $A_i$ is contiguously stored in memory and disjoint to $B$,
			so it is either whole between $w'$ and $w$ or no part of it is there.
			Let $A'$ be the memory interval bounded by $w'$ and $w$ excluding them.
			If $A'$ is partitioned by $B$, it contains uniquely determined memory-consecutive vertices $u$, $v$ s.t.
			the beginning of $A'$ up to $u$ (inclusive) is to the right of $B$
			and the rest of $A'$, from $v$ on, is to the left of $B$ by (e).
			Any $A_i ⊆ A'$ must contain $u$ and $v$ to be partitioned by $B$.
			If $A'$ is not partitioned by $B$, none of its subintervals is.
	\end{enumerate}
\end{proof}

\image{figure2}{
	Memory-consecutive vertices $u$, $v$ in tree.
	In the first case (on the left) on~successive branches.
	In the second case (on the right) separated by branch $B$.
	The second case is extended by other branch $B'$ separating new vertex $u'$ from the two.
}

\subsubsection{Subtree in memory}
\label{sect:veb_theory_subtree}

For a vertex $v∈\T$, we define $T(v)∈\D$ as the topmost vertex in $\D$ having $v$ as root.
It is thus the greatest decomposition subtree rooted at $v$.
We also define $h(v)$ as the height of~$T(v)$.
Since the value of $h(v)$ is the same for all vertices on the same level in $\T$,
we also use $h[d] := h(v) = h(T(v))$ for $v$ on level $d$.
We can use $h[]$ as a precomputed array.
The memory layout of a subtree has the properties stated in the following lemma,
see Figure~\ref{img:figure3}:

\begin{lemma}[subtree in memory]\label{lem:subtree_memory}

	Each subtree $T$ is divided into at most $O(\log\log |T|)$ memory intervals,
	each comprising of a range of whole levels of $T$.
	Let $T$ be rooted at depth $d$
	and $\{d=d_0, \dots, d_1-1\},$ $\{d_1, \dots, d_2-1\},$ $\dots,$ $\{d_k, \dots, d_{k+1}-1\}$
	be those ranges.
	Then $d_{i+1} = d_i + h[d_i]$ for $i ∈ \{0, \dots, k\}$.

	The first vertex of the $i$-th memory interval is the leftmost vertex of $T$ on level $d_i$,
	the last vertex of the $i$-th memory interval is the rightmost vertex of $T$ on level $d_{i+1}-1$.
	If $T$ has left sibling,
	the vertex just before the $i$-th memory interval is the rightmost vertex of that sibling on level $d_{i+1}-1$;
	if $T$ has right sibling,
	the vertex just after the $i$-th memory interval is the leftmost vertex of that sibling on level $d_i$.
\end{lemma}
\image{figure3}{
	A blue subtree divided into parts which are stored contiguously in different memory intervals.
	The first and last vertex of each such interval is also drawn in blue.
	The~red subtrees are stored just beside the memory intervals of the blue trees
	and have marked their last/first vertices.
}
\begin{proof}
	Let $v$ be the root of $T$ and $T_0 = T(v)$.
	By definition, $T_0$ fills the first memory interval and so $d_1 = d_0 + h(T_0) = d_0 + h[d_0]$.
	The maximality of height of $T_0$ implies that
	either it contains the bottommost level
	or it is at the bottom of the upper subtree of some greater decomposition subtree $R_0$,
	not occupying the whole upper subtree.
	In the latter case, $T$ continues in the bottom part of $R_0$ with the subtrees,
	which are stored contiguously there and occupy the second memory interval of $T$;
	their height is the same as the height of the bottom part of $R_0$,
	which is $h[d_1]$ as there is multiple of them beside each other in memory, so we cannot prolong them.
	Thus $d_2 = d_1 + h[d_1]$.

	Let $T_1$ be the subtree of $T$ containing the first $h[d_0]+h[d_1]$ levels
	and occupying just the first two memory intervals.
	Its height is maximal so it either contains the base level
	or is at the bottom of the upper part of some greater decomposition subtree $R_1$
	and we can proceed with induction.
	Each such step at least doubles the height of $T_i$,
	so $T$ is divided into at most $\cO(\log h[T]) = \cO(\log\log |T|)$ memory intervals.

	By Observation~\ref{obs:memory_order},
	the first vertex in a memory interval has to be the leftmost one on the uppermost contained level
	and the last vertex is the rightmost one on the bottommost level.
	A sibling of $T$ will be divided into memory intervals the same way
	in the same decomposition subtrees $R_i$
	and its memory intervals will be just beside the memory intervals of $T$,
	from which the rest of the second paragraph of the statement follows.
\end{proof}

\subsection{Depth-first search}
\label{sect:veb_dfs}
\subsubsection{Navigating to visit a whole memory interval}
\label{sect:veb_dfs_alg}

Here, we describe
how to navigate during a left-to-right DFS
to visit just vertices of a given memory interval $P$ and their ancestors.

We use a depth-indexed array containing for each tree level the leftmost and rightmost vertex of $P$
and a flag telling whether the currently visited branch is between them.
By Lemma~\ref{lem:interval_in_tree}~(b),
we need to visit just the vertices between the leftmost and rightmost one
and we can easily identify the bounding vertices of all levels by just one scan of the memory interval.

By Lemma~\ref{lem:interval_in_tree}~(c),
the leftmost branch intersecting $P$ contains the leftmost vertex of the bottommost level.
We thus begin our DFS by navigating to the bottommost out of the leftmost vertices,
then we continue to the right until we have visited on each level either both bounding vertices or none of them
and we repeat navigating to another unvisited bottommost bounding vertex if there is any.
We always recognize whether to descend deeper without reading the child vertex
by just comparing its pointer with the bounding vertices (only for equality)
or using the flag saying whether we are between them.
See Algorithm~\ref{alg:RecalculatePointers} for details as part of our pointer recalculation procedure.

\subsubsection{DFS complexity}

Now we bound the I/O complexity of a depth-first search of a contiguous memory interval
as well as DFS of a subtree containing all its leaf descendants,
which is not necessarily contiguous in memory.

We first show that DFS of specific \textit{small decomposition subtrees} occupying at least one block is cache-optimal.
We need only a constant number of blocks to fit in cache for the traversal of those subtrees.
Then we bound the number of visits of most of the~other vertices
by the number of the small subtrees,
so even loading their blocks again for each visit will not worsen the complexity.
Finally, we handle a few remaining vertices on a~constant number of branches
by bounding their visits by search complexity.

Let us define several terms regarding
the state of cache blocks and DFS traversal
at a specific point in time.
In particular, we maintain an \textit{active} branch,
which is moving from left to right according to our DFS traversal.
We also differentiate cache blocks according to whether they intersect active branch
and whether their data were already used. Formally:

\begin{definition}
	The \textit{active branch} of a subtree $T$
		is the leftmost branch so that no already visited vertex is to the right of it.
	A \textit{partially read block}
		is a block which was loaded into memory, but only part of its data were visited.
	Partially read block is \textit{active}
		if it contains any vertex of~the~active branch.
	Partially read block is \textit{inactive} otherwise.
\end{definition}

Now we show the I/O complexity of traversal of the whole small subtrees.

\begin{lemma}[small decomposition subtrees]\label{lem:small_dec_subtrees}
	I/O complexity of DFS of a subtree $T$ in $\D$ of~height in $(⌈\log B⌉, ⌈\log B⌉/ε]$ is $\cO(|T|/B)$.
\end{lemma}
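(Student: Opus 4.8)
The statement says DFS of a decomposition subtree $T$ of height $h \in (\lceil\log B\rceil, \lceil\log B\rceil/\varepsilon]$ costs $\mathcal O(|T|/B)$ I/Os. Since $T$ has height $> \lceil\log B\rceil$, it has $|T| \ge 2^{\lceil\log B\rceil} \ge B$ vertices, so the target bound $\mathcal O(|T|/B)$ is at least a constant and we genuinely have a budget of $\Theta(|T|/B)$ blocks to work with. The natural approach is to recurse on the vEB decomposition of $T$: by Definition~\ref{def:van_emde_boas}, $T$ is cut into a top tree $R_0$ and bottom trees $R_1,\dots,R_k$, each stored contiguously (up to PMA holes) and consecutively in memory, with each $R_i$ of height roughly $\varepsilon h$ or $(1-\varepsilon)h$. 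The key point is that because $h \le \lceil\log B\rceil/\varepsilon$, the top tree has height $\le \varepsilon h \le \lceil\log B\rceil$, hence $|R_0| \le 2^{\lceil\log B\rceil} = \mathcal O(B)$ and fits in $\mathcal O(1)$ blocks; I would pick the recursion so that we stop as soon as a piece has height $\le \lceil\log B\rceil$, at which point that piece occupies $\mathcal O(1)$ blocks.

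\textbf{Main argument.} I would argue that a left-to-right DFS of $T$ can be performed so that each block of $T$'s contiguous memory image is loaded only $\mathcal O(1)$ times. Split $T$ at the chosen vEB cut into $R_0, R_1, \dots, R_k$. The DFS visits $R_0$, then interleaves descents into $R_1, \dots, R_k$ in left-to-right order; crucially, once DFS leaves $R_0$ to enter $R_i$ it returns to $R_0$ only to descend into $R_{i+1}, \dots$, and the path back up through $R_0$ touches only vertices on the active branch of $R_0$. Since $R_0$ fits in $\mathcal O(1)$ blocks, all these revisits are free after the first load. For the bottom trees: each $R_i$ is visited by DFS as one contiguous recursive call, so it suffices to bound the DFS cost of each $R_i$ and sum. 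Now recurse: if $R_i$ still has height $> \lceil\log B\rceil$, apply the same splitting; otherwise $|R_i| = \mathcal O(B)$ and DFS of $R_i$ costs $\mathcal O(1) = \mathcal O(|R_i|/B + 1)$. Unrolling, the total cost is $\mathcal O(1)$ (for the top-tree "spine" contributions at each level) plus the sum over the leaf-level pieces of $\mathcal O(|\text{piece}|/B + 1)$. Because every leaf-level piece has height $> \lceil\log B\rceil$ at the level just above it — wait, I must be careful here: a subtlety is that the pieces at which we stop have height $\le \lceil\log B\rceil$ but their \emph{parent} pieces had height $> \lceil\log B\rceil$, so each such parent contains $\ge B$ vertices; charging the $\mathcal O(1)$ term of each leaf piece to $\ge B$ vertices of its parent gives $\mathcal O(|T|/B)$ in total, and the $\sum |\text{piece}|/B$ terms sum to $\mathcal O(|T|/B)$ since the pieces partition $T$'s vertices.

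\textbf{Formalizing "each block loaded $\mathcal O(1)$ times".} Alternatively, and perhaps more cleanly, I would phrase it via the "active branch" terminology just introduced: as DFS proceeds left-to-right, at any moment the only partially-read blocks are (i) the active block along the active branch and (ii) inactive partially-read blocks. Using Observation~\ref{obs:memory_order} and Lemma~\ref{lem:interval_in_tree} (the vertices so far visited, together with ancestors, form a memory-prefix-like structure within each piece), I would show that once a block of $T$ ceases to be active it is never reloaded, because all of its vertices lie to the left of the active branch and will not be revisited. Counting: the active block changes $\mathcal O(\text{number of contiguous runs}) = \mathcal O(|T|/B + (\text{number of pieces of height} \le \lceil\log B\rceil))$ times, and the latter is $\mathcal O(|T|/B)$ by the charging argument above since $h \le \lceil\log B\rceil/\varepsilon$ bounds the recursion depth and hence prevents the number of tiny pieces from exploding — each is carved from a parent of size $\ge B$.

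\textbf{Expected obstacle.} The main difficulty is making the "a block, once left, is never reloaded" claim precise, in particular handling the returns up through the top tree $R_0$: DFS repeatedly re-enters $R_0$'s memory region, so I cannot literally say blocks are loaded once in memory order. The fix is to observe $R_0$ is $\mathcal O(1)$ blocks total, so treat the whole of $R_0$ as a single "super-block" that stays resident (or is reloaded $\mathcal O(1)$ times) for the duration of the recursive calls below it, and account for it separately at each level of the recursion — this is where the height bound $h \le \lceil\log B\rceil/\varepsilon$ does real work, since without it the top tree could itself be too large to treat this way and the recursion depth could be unbounded relative to the $\log B$ scale. Getting the bookkeeping of these per-level $\mathcal O(1)$ charges to telescope into $\mathcal O(|T|/B)$ rather than $\mathcal O((|T|/B)\log\text{something})$ is the crux, and it works precisely because the branching of the vEB recursion means the number of pieces at the stopping level is $\mathcal O(|T|/B)$.
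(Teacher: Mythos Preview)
Your recursive decomposition idea is natural, but as written it does not reach the target bound, and the paper's proof takes a quite different and more direct route.

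\textbf{Two concrete gaps.} First, the bound $|R_0|\le 2^{\lceil\log B\rceil}=\cO(B)$ is false: vertices have arity up to $b\ge 3$, so a piece of height $\lceil\log B\rceil$ can contain up to $b^{\lceil\log B\rceil}=B^{\log_2 b}$ vertices, which is not $\cO(B)$. Second, and more seriously, your charging ``each leaf piece's $\cO(1)$ to the $\ge B$ vertices of its parent'' does not work, because a single parent piece has \emph{many} bottom-subtree children, not $\cO(1)$ of them. What you actually need is a lower bound on the size of each leaf piece itself. But a leaf piece (a bottom subtree at the stopping level) has height $h_L$ with only $h_L>(1-\varepsilon)\lceil\log B\rceil$ guaranteed, hence size as small as $a^{h_L-1}=\Omega(B^{1-\varepsilon})$. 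So the number of leaf pieces can be as large as $\Theta(|T|/B^{1-\varepsilon})$, and your argument yields only $\cO(|T|/B^{1-\varepsilon})$, not $\cO(|T|/B)$. (Concretely: take $\varepsilon=1/2$, $B=2^{20}$, $h(T)=21$, all arities equal to $2$; then $h_L=11$, there are $2^{10}$ leaf pieces, each of size $\approx 2^{11}$, so roughly $2^{10}$ pieces against $|T|/B\approx 2$.)

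\textbf{What the paper does instead.} The paper does not recurse at all. It keeps the $\cO(1/\varepsilon^2)$ \emph{active} blocks---those intersecting the current active branch---resident in cache (this count comes from Lemma~\ref{lem:search_complexity} applied to a branch of height $\le\lceil\log B\rceil/\varepsilon$). The key step you are missing is the use of Lemma~\ref{lem:interval_in_tree}\,(f): any memory interval, in particular any single block, can be partitioned into at most two pieces by a branch. Therefore a block can become \emph{inactive} (partitioned by the active branch with unread material on the far side) at most once, so each block of $T$'s contiguous image is loaded at most twice. Since $h(T)>\lceil\log B\rceil$ forces $|T|\ge B$, the total is $2(|T|/B+2)=\cO(|T|/B)$. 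This partition-by-branches lemma is precisely the structural fact that replaces your recursion and makes the bound tight in $B$.
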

\begin{proof}
	During DFS, we always keep in memory all active blocks;
	by Lemma~\ref{lem:search_complexity}, their number is at most $\cO((\log B/ε) / (ε⋅\log B) + 1) ⊆ \cO(1/ε^2)$.
	When active branch is being changed, we will not reread blocks that are already active;
	thus we need to read a block again only after it was inactive.

	If a partially read block becomes inactive,
	it is partitioned by an active branch into two components
	and, by Lemma~\ref{lem:interval_in_tree}~(f), this may happen in at most one way.
	Each block is thus read at~most twice.

	The tree $T$ is contiguously stored in memory
	and uses at least one whole block due to arity of~inner vertices being at least two.
	In total we need to read at most $2(|T|/B + 2) ∈ O(|T|/B)$ blocks.
\end{proof}

We also need to handle small subtrees which are stored in our memory interval only partially.
Either the memory interval is small enough that it is fully contained in the subtree,
or we have at most two such subtrees at the borders of the larger memory interval.

\begin{lemma}[parts of small decomposition subtrees]\label{lem:parts_of_small_dec_subtrees}
	Let $T$ be a subtree in $\D$ of height at~most $⌈\log B⌉/ε$, let $P$ be its memory-contiguous part,
	then DFS of $P$ has I/O complexity $\cO(|P|/B + (\log_B |T|)/ε + 1)$ starting the DFS in the root of $T$.
\end{lemma}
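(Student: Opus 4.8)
The plan is to bound the I/O cost by the number of distinct cache blocks the DFS touches, and to split the vertices it visits into those lying in $P$ and the proper ancestors (inside $T$, below the root of $T$) of vertices of $P$ that the DFS must walk through. Since $P$ is stored in a memory-continuous stretch, it occupies $\cO(|P|/B + 1)$ blocks, so the work is entirely in accounting for the ancestors and for the fact that decomposition subtrees overlapping $P$ may be entered more than once.

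First I would handle the descent to the working area. By Lemma~\ref{lem:interval_in_tree:c} the leftmost branch meeting $P$ also contains the deepest vertex $v_0$ of $P$, and the DFS starts by navigating from the root of $T$ down to $v_0$; this is a downward walk inside $T$ of length at most $h(T)$, so by Lemma~\ref{lem:search_complexity} it costs $\cO\big((\log_B|T|)/ε + 1\big)$, using $h(T)\le ⌈\log B⌉/ε$ to replace $h(T)/(ε\log B)$ by $\cO((\log_B|T|)/ε)$ (this rewriting is tight both when $h(T)=\Theta(\log B)$ and when $h(T)=⌈\log B⌉/ε$, since there $\log_B|T|=\Theta(h(T)/\log B)$ up to the constants $a,b$).

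For the rest of the DFS I would mimic the proof of Lemma~\ref{lem:small_dec_subtrees}: decompose $T$ along $\D$ into subtrees of height $\Theta(\log B)$---exactly those to which Lemma~\ref{lem:small_dec_subtrees} applies---with the few topmost pieces possibly shorter; these partition the vertices of $T$, and any root-to-leaf path of $T$ crosses only $\cO(\log_B|T|)$ of them. Every vertex the DFS visits after reaching $v_0$ lies in one such piece $S$, classified by how the DFS touches it: if $S\subseteq P$, its reading is part of the $\cO(|P|/B+1)$ already charged for $P$; if $S$ overlaps $P$ only partially, the part inside $P$ contributes $\cO(|S\cap P|/B)$ (and these sum to $\cO(|P|/B)$) while the part of $S$ sticking above $P$ is a constant number of branches of $S$, costing $\cO((\log_B|S|)/ε+1)=\cO(1/ε)$ by Lemma~\ref{lem:search_complexity}; and if $S$ is disjoint from $P$ but holds ancestors leading toward $P$, again only $\cO(1)$ branches of $S$ are entered. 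By Lemma~\ref{lem:interval_in_tree:f} every piece not fully inside $P$ hangs on one of the (at most two) branches bounding $P$, so there are only $\cO(\log_B|T|)$ of them, each costing $\cO(1/ε)$; and whenever the DFS descends through several bottom $\D$-children of one piece, all but the two extreme of the bottom subtrees below them lie wholly in $P$, so the surplus descents are charged to $|P|$ rather than to the $(\log_B|T|)/ε$ budget. As in Lemma~\ref{lem:small_dec_subtrees}, no block is reloaded more than once (again Lemma~\ref{lem:interval_in_tree:f}). Adding the contributions gives $\cO(|P|/B) + \cO((\log_B|T|)/ε) + \cO(1)$, as claimed.

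The step I expect to be the main obstacle is precisely the bookkeeping of the last paragraph: one must use Lemma~\ref{lem:interval_in_tree}---in particular the monotonicity of the topmost and bottommost depths along the branches meeting $P$ (parts (c) and (d)) and the ``at most two components'' property (part (f))---to argue that only $\cO(\log_B|T|)$ of the small pieces are ever traversed outside $P$, and to make the amortization of the extra downward walks against $|P|$ airtight. Once that combinatorial picture of how $P$ sits inside each small piece is pinned down, the I/O estimates follow from Lemmas~\ref{lem:search_complexity} and~\ref{lem:small_dec_subtrees} exactly as above.
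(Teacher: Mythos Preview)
Your piece-level decomposition has a real gap: both the count of ``pieces not fully inside $P$'' and the per-piece cost estimate are wrong.

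For the cost, the claim that a piece $S$ disjoint from $P$ has ``only $\cO(1)$ branches of $S$ entered'' fails whenever every leaf of $S$ has a descendant in $P$; then the DFS traverses \emph{all} of $S$. For the count, Lemma~\ref{lem:interval_in_tree:f} does not say that such pieces must intersect a boundary branch. Concretely: take $T$ with three layers of height-$\Theta(\log B)$ pieces and let $P$ be a memory interval spanning many middle-layer pieces $R_{i0}$ together with their bottom layers. The second-layer pieces $R_{0j}$ sitting above the interior $R_i$'s are disjoint from $P$, are fully traversed (all their leaves have descendants in $P$), and the boundary (red) branches pass only through the two extreme $R_{0j}$'s. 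The interior $R_{0j}$'s are visited but touch no boundary branch, so your $\cO(\log_B|T|)$ bound on such pieces is false, and even for the pieces that do meet the boundary the cost is $\cO(|S|/B)$, not $\cO(1/\varepsilon)$.

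The paper closes this hole by working at the block level with a three-colouring: $P$ is blue, ancestors on the leftmost/rightmost branches of each component are red, all other ancestors are green. The crucial observation you are missing is that every green vertex has at least two visited children in the DFS, so the green vertices are internal nodes of a tree whose leaves are blue; hence the green mass is dominated by the blue mass, and the number of fully green blocks is at most the number of blue blocks. Red branches give the $\cO((\log_B|T|)/\varepsilon)$ term, and the remaining partially green blocks (containing neither blue nor red) are handled via Lemma~\ref{lem:interval_in_tree:g}, one per red edge. Your sentence about ``surplus descents charged to $|P|$'' is gesturing toward this arity-$\ge 2$ amortization, but you need to make it the backbone of the argument rather than an afterthought.
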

\begin{proof}
	We mark visited vertices by one of the three colors: blue, green and red. (see Figure~\ref{img:figure4})
	The vertices of $P$ are all blue;
	by Lemma~\ref{lem:interval_in_tree}~(f),
	they may be divided into at most two distant branch-separated components.
	The non-blue ancestors of blue vertices in each component are green
	except for that on the leftmost and rightmost branch, which are red.
	So we have at most four red subbranches.

	As in the proof of the previous lemma, we keep all active blocks in memory and load them at~most twice.
	The blue part is still contiguous part of memory,
	but this time it may be smaller than one block;
	visiting the blue vertices thus has I/O-complexity $\cO(|P|/B + 1)$.

	The green parts are not necessarily contiguous in memory,
	but after having loaded all active blocks,
	we can just continue DFS as in the previous lemma.
	So we first load $\cO((\log_B |T|)/ε + 1)$ blocks on the left red branch as one search in tree,
	then continue visiting the green vertices,
	and finally pay another $\cO((\log_B |T|)/ε + 1)$ for the right red branch,
	which may contain some partially used blocks.
	The number of fully green blocks is lower than the~number of blue blocks,
	because green vertices have arity at least 2 in DFS;
	so we can pay their loading by the blue part.

	Now only partially green blocks without blue and red vertices remain.
	These blocks need to be partitioned by one of the four red branches
	having one component green and the other unvisited.
	By Lemma~\ref{lem:interval_in_tree}~(g),
	we have at most one such block for each red vertex,
	and moreover it is in memory between the vertex and its parent.
	Therefore we count only the~divided blocks corresponding
	to the red vertices not sharing their blocks with their parents.
	This leads to $\cO((\log_B |T|)/ε + 1)$ blocks.
\end{proof}

\image{figure4}{
	Different cases of memory interval in tree.
	The memory interval is blue,
	ancestors of~blue vertices are either green, or red if on boundary.
	See the proof of Lemma~\ref{lem:parts_of_small_dec_subtrees} for more details.
}

We now compute the I/O complexity of the whole DFS and then of PMA updates.

\begin{theorem}[DFS complexity]\label{thm:dfs}
	Visiting all vertices in the contiguous part $P$ of memory using our DFS algorithm
	has I/O complexity $\cO(|P|/B + (\log_B N)/ε + 1)$;
	the time complexity is $\cO(|P| + \log N)$.
	The I/O and time complexity of a DFS of a subtree $T$ containing all descendants of its root is
	$\cO(|T|/B + ε^{-2})$ and $\cO(|T|)$, resp.,
	if we begin in the subtree root.
	Using depth-indexed arrays accessed by the current depth during the DFS or sequential scan of $P$
	will not worsen the complexities.
\end{theorem}
\begin{proof}
		We begin with the first part of the theorem and
		mark all visited vertices as blue, green and red
		the same way as in the previous lemma
		except for a branch leading to the rightmost vertex of the bottommost level,
		which will be red (see Figure~\ref{img:figure5}).
		We note that each vertex of arity one in DFS is now red.

		If the tree is too low, we can use directly Lemma~\ref{lem:small_dec_subtrees} or~\ref{lem:parts_of_small_dec_subtrees}.
		Let us assume $h(\T) > ⌈\log B⌉/ε$.

		We take all the bottommost visited vertices,
		i.e. visited vertices not having their children visited,
		and find lowest decomposition subtrees of height at least $⌈\log B⌉$ containing them.
		If any such subtree is part of another one, we use only the higher.
		We call them \textit{small subtrees}.
		All but two of these subtrees are fully blue (dark blue in Figure~\ref{img:figure5});
		the at most two other are on~the~boundaries of $P$.
		We further divide small subtrees into \textit{leaf} subtrees and \textit{non-leaf} subtrees,
		where leaf ones have no visited descendants.

		Each small subtree has height at most $⌈\log B⌉/ε$.
		We use Lemmas~\ref{lem:small_dec_subtrees} and~\ref{lem:parts_of_small_dec_subtrees} to determine the~I/O complexity of all leaf small subtrees;
		the sum is certainly $\cO(|P|/B + (\log_B N)/ε + 1)$, where the $\cO((\log_B N)/ε + 1)$ part is for the at most two partially blue small subtrees.
		As~the~scan of~leaf subtrees is not interrupted by descending below them, the lemmas can be used.

		The number of ancestors of leaf small subtrees of degree at least two in DFS is lower than the number of these subtrees
		and the sum of their degrees is lower than twice the number of~the~subtrees.
		Even if we would need to load whole cache block for each visit of each of these ancestors,
		the number of loads will be asymptotically bounded by the number of leaf small subtrees.
		As each fully blue small subtree occupies at least one block,
		we can bound this part by $\cO(|P|/B + (\log_B N)/ε + 1)$.

		We now bound visits of the ancestors of degree 1, all of which are on a constant number of red branches, by $\cO((\log_B N)/ε + 1)$.
		We divide the branches into segments occupying individual cache blocks.
		If a segment contains only degree-1 vertices, we need to load it at~most twice~-- when going down and back up.
		If it contains vertices of higher degrees, further reloadings of~the~block are payed by the reasoning in~the~previous paragraph.

		Finally, we consider the non-leaf subtrees, which is the last part of the tree to be paid for (see Figure~\ref{img:figure6}).
		By Lemma~\ref{lem:interval_in_tree}~(c),
		the leaves of such a subtree with visited descendants in~$\T$ are all to~the~left of all leaves without such descendants.
		We note that the rightmost branch of~the~left part is fully red, as it leads to the rightmost vertex on the bottommost level of~$P$.
		We have already paid for loading vertices in the left part as described in~the~previous paragraphs.
		After visiting them and their descendants,
		we load all the vertices on the red branch of this small subtree and continue traversing it according to Lemma~\ref{lem:small_dec_subtrees}.
		Traversing the~right part of~the~subtree is thus bounded by traversing the whole small subtree in case it had no descendants.
		The~whole I/O complexity of this part is then also bounded by $\cO(|P|/B + (\log_B N)/ε + 1)$.

		The depth-indexed arrays accessed by the current depth
		such as the stack of ancestors or arrays for navigation
		will not worsen the complexity,
		because loading their blocks into memory can be bounded
		by loading the corresponding part of the active branch.
		Accessing those arrays during sequential scan of $P$ hides in the complexity of the scan:
		We focus on the lowest level of decomposition where all trees are larger then $B$;
		during the scan, we read sequence of those subtrees and each requires $O(1)$ blocks of depth-indexed arrays.

		The total I/O-complexity sums up to $\cO(|P|/B + (\log_B N)/ε + 1)$.
		Similarly we get the~stated time complexity.

		To prove the second part of the theorem, we proceed similarly.
		This time, $T$ is not neccessarily contiguous in memory,
		so we mark only the bottom contiguous part as blue
		and the remaining ancestors as green, which will be nevertheless also visited.
		We note that $(\log_B|T|)/ε$ in
		Lemma~\ref{lem:parts_of_small_dec_subtrees} is $\cO(ε^{-2})$,
		so we get $\cO(|T|/B + ε^{-2})$ for the small subtrees.
		To get rid of the $\cO((\log_B |T|)/ε)$ arising from the red branches,
		we observe that all ancestors of the blue vertices that are part of $T$
		are now of degree at least 2 and we have no non-leaf small subtrees,
		so the red branches are not needed at all.
\end{proof}

\image{figure5}{
	A tree containing blue memory interval with dark blue small subtrees.
	The ancestors of~blue vertices are green.
	Several special branches are red.
}

\image{figure6}{
	On the left, non-leaf small subtrees crossed by the red branch.
	On the right, detail of~one non-leaf small subtree crossed by the red branch.
}
\subsection{Building the (sub)tree using DFS}
\label{sect:veb_build}

Assuming we have enough space reserved for the vertices of a new subtree
in the right memory intervals,
we show how to build there the subtree, i.e. initialize vertices with pointers to their children.
If we are building the whole tree, there is just one such memory interval;
if we are building its subtree, there may be multiple of them as shown by Lemma~\ref{lem:subtree_memory}.
The lemma also shows that we can find where to reserve the space for the subtree by traversing
a single branch of its existing sibling subtree, which we use later in Section~\ref{sect:corobts};
here, we do not count the required memory movements into complexity.

We first present a general approach how to navigate in a non-existent tree $T$ using depth-first search,
which can be easily used to build a fixed-arity tree.
Then, we generalize it in Section~\ref{sect:veb_general_build}
to build an arbitrary subtree provided that we can simulate its DFS.

Let $h[]$ be the precomputed array as defined in Section~\ref{sect:veb_theory_subtree}
and $M[i][j]$ be (pointer to) the $j$-th vertex of the $i$-th memory interval.
We use arrays $I$, $J$, $K$ indexed by depth for navigation.
For a~depth $d$, $M[I[d]][J[d]]$ is either the last visited vertex on level $d$ or the next vertex to~visit;
it~will be the latter each time we are descending to that level,
so we can safely initialize pointer from its parent.
In other words, array $I$ determines the correspondence between levels in $T$ and memory intervals containing them,
and so it stays constant throughout the~whole process;
array $J$ points to the current positions in the respective memory intervals.
The~value $K[d]$ is the number of times vertex $M[I[d]][J[d]]$ was visited
and it is reinitialized to zero each time $J[d]$ is changed.
We use it to determine whether to descend to next child or go up.

The topmost vertex of each memory interval on the leftmost branch is the first vertex in that interval,
by Lemma~\ref{lem:subtree_memory}.
We calculate indices of vertices in $M$ which are in the same memory interval as their parents and are their leftmost children
from indices of their ancestors in the interval (see green arrows on Figure~\ref{img:figure7}).
The indices of other vertices are calculated from the vertices just before them in memory,
which were already visited (see red arrows on Figure~\ref{img:figure7}).

\image{figure7}{
	Building subtree.
	Arrows show how indices of new vertices are computed.
}

In a fixed-arity case,
we can calculate the size of each decomposition subtree just from its height.
Let $R$ be a decomposition subtree rooted at $v$ and $R_0$ its leftmost child in $\D$ (upper part of~$R$).
We can calculate size of $R_0$ and so the index of the root $w$ of the second leftmost child~$R_1$ of~$R$ in~$\D$.
The vertex $w$ is on the leftmost branch of $R$, just under $R_0$.
After entering a~vertex~$v$, we always calculate positions of such $w$ for each decomposition subtree $R$ rooted at $v$.
This ensures that the position of the leftmost child $w'$ of any vertex is always initialized before descending to it:
either the parent of $T(w')$ in $\D$ is whole in one of~the~memory intervals and its root was already visited,
or $w'$ is the first vertex in its memory interval.

After returning to a vertex $v$ from its child $w$ willing to descent to another child $w'$,
the~tree~$T(w)$ is already built and the vertex $w'$ is in memory just after the rightmost leaf of~$T(w)$,
which is the last visited vertex on level $d(v) + h[d(v) + 1]$.
Indices of this vertex are still stored in $I$,~$J$,
so we can compute the index of $w'$ and descend there.

See Algorithm~\ref{alg:BuildSubtree} for details.

\bigskip
The I/O and time complexity is bounded by the DFS complexity during pointer recalculation:

\begin{lemma}[complexity of subtree building]\label{lem:subtree_build}
	The I/O complexity of building a given-arity subtree $T$ is
	$\cO(|T|/B + \min((\log_B N)/ε + 1,\, ε^{-2}))$.
	The time complexity is~$\cO(|T|)$.
\end{lemma}
\begin{proof}
By~Theorem~\ref{thm:dfs}, the I/O complexity of a DFS of the tree $T$
can be expressed as $\cO(|T|/B + ε^{-2})$ or $\cO(|T|/B + (\log_B N)/ε + 1)$
if considered as the DFS of the last memory interval of $T$,
which visits the same area.
We only need to show that accessing the arrays $I$, $J$, $K$ do not worsen the I/O complexity.

We access the arrays $I$, $J$, $K$ in two cases.
The first one is when we access the currently visited vertex,
which is taken into account in~Theorem~\ref{thm:dfs}.
The~second case is accessing the more distant vertex of the green and red arrows in Figure~\ref{img:figure7};
we~further analyze this case.

We partition $T$ into the smallest possible decomposition subtrees of heights at least $⌈\log B⌉$.
The~topmost one may contain even vertices outside of $T$;
the others are partitioning of~the~rest.
Each of those subtrees occupies at least one cache block
and parts of $I$, $J$, $K$ corresponding to~the~branch of a specific subtree still fit in cache at the same time.
We distinguish short (red/green) arrows inside individual subtrees and long arrows between different subtrees.

When calculating indices on the leftmost branch inside a subtree (short green arrows),
we can keep them in memory to the time we visit them.
Similarly, when calculating indices from already visited vertices inside a subtree (short red arrows),
they might have been kept in memory from the time they were visited.
No extra blocks are then loaded for these cases.

When calculating indices of other vertices (long arrows) they are always roots of~the~subtrees.
We pay for loading the more distant vertex by the subtree.
\end{proof}

\subsubsection{Building an arbitrary tree}
\label{sect:veb_general_build}

A general procedure for subtree build is as follows.
First, we simulate DFS of the resulting subtree $T$
and store all its vertices contiguously in a temporal array $L$ in left-to-right preorder;
we require this simulated walk to be possible.
Each vertex contains its data, depth and a constant-sized space to be used later,
but no pointers.
Then, we scan $L$ backwards, which corresponds to right-to-left postorder,
count sizes of the decomposition subtrees and write them to the relevant vertices in $L$.

We use arrays $I$, $J$, $K$, $M$ the same way as in the fixed-arity tree build.
The difference is that the arity of individual vertices vary
and so we cannot easily compute the sizes of decomposition subtrees,
which are then used to obtain memory positions of vertices pointed
by green arrows in Figure~\ref{img:figure7}.
The precise goal of our backward scan of $L$ is
that each vertex pointed by the green arrow contains
the size of the largest decomposition subtree rooted in the arrow's origin
not containing the arrow's destination.
This is exactly the memory offset needed during the subtree build.
To build the subtree, we scan $L$ in forward direction,
infer from depths of following vertices whether the current vertex have other children
and read the desired sizes of subtrees when needed.

It remains to show how the backward scan of $L$ works.
Let $D$ be another depth-indexed array.
It aggregates the number of vertices in the largest decomposition subtrees having their bottommost level at the given depths.
The height of such subtrees for each depth is given by the decomposition.
Each time we reach the root of such a subtree,
we copy the value to a vertex in $L$,
possibly add it to a deeper counter
and zero it.
See Figure~\ref{img:figure8}.

\image{figure8}{
	General subtree build: schema of summing counters in the array $D$ during backward scan of $L$.
	The subtree $T$ stored in $L$ is drawn red as part of the whole tree $\T$, where it should be build.
	The array $D$ is drawn to the left with blue dots representing some of the counters.
	The counters are connected to the blue areas in the tree $T$
	whose vertices should be counted there after they are whole visited.
	Other areas are added earlier or later as illustrated by the black arrows.
}

The exact procedure of visiting a vertex $v$ in the backward scan of $L$ is as follows.
Let $v$ be the root of decomposition subtrees of heights $h_0 = 1, h_1, \dots, h_k$
and let $v_i$ for $i ∈ \{0, \dots, k-1\}$ be the leftmost vertex just under the decomposition subtree of height $h_i$.
We note that since $L$ contains vertices in preorder,
the whole leftmost branch of the subtree rooted at $v$ is stored consecutively
and the vertex $v_i$ is $h_i$ steps after $v$.
First, we increment the counter $D[d(v)]$ on the vertex level.
Then for each $i ∈ \{0, \dots, k-1\}$,
we copy the size of $i$-th decomposition subtree $D[d(v) + h_i - 1]$ to vertex $v_i$,
add that value to the counter of the next decomposition subtree $D[d(v) + h_{i+1} - 1]$ and zero it.
See Figure~\ref{img:figure9} for the illustration of the algorithm.

\image{figure9}{
	General subtree build: backward scan of $L$ in detail.
	The subtree $T$ whose preorder is written in $L$ is rooted at $w$.
	The red tree boundaries represent the vEB decomposition, blue numbers are the sizes of subtrees,
	which should be written in the vertices at the end.
	Each column of the table to the left represents the state of the array $D$
	at a specific point in time.
	The column $D_{39}$ is the state when the right-to-left postorder scan enters $v$:
	At the leaf level $D_{39}[H-1]$ contains number (27) of vertices in the right child subtree of $w$ (15)
	and the four bottom 3-vertex subtrees in the left subtree (12);
	$D_{39}[H-3]$ counts the children of $v$.
	After entering $v$, its counter $D[H-4]$ is incremented ($D_{40}$).
	Then it is copied to $v_0$,
	added to counter $D[H-3]$ and zeroed ($D_{41}$).
	The counter $D[H-3]$ is copied to $v_1$, added to counter $D[H-1]$ and zeroed ($D_{42}$).
	After entering $w$, its counter $D[H-5]$ is incremented and the scan finishes ($D_{43}$).
	The two values in red circles (1, 30) are the sizes of the resulting memory intervals.
}

We never zero the counter of the largest decomposition subtree,
because it may be shared with some greater subtree; the value is not needed during build.
After the backward scan finishes, some non-zero values are left in $L$,
each corresponds of the desired size of the memory interval containing that level as the bottommost.
The first non-zero value contains the size of the largest decomposition subtree $T_0$ rooted at the root of the subtree $T$ to be built;
the tree $T_0$ will be stored in the first memory interval.
The second non-zero value counts vertices of the decomposition subtrees under $T_0$
as the size of $T_0$ was not added yet to the value;
it thus corresponds to the second memory interval, etc.

\begin{lemma}[complexity of general subtree building]\label{lem:general_subtree_build}
	The I/O complexity of building a general subtree $T$ is
	$\cO(|T|/B + \min((\log_B N)/ε + 1,\, ε^{-2}))$,
	provided that we can simulate its DFS.
	The time complexity is~$\cO(|T|)$.
\end{lemma}
\begin{proof}
	The array $L$ is accessed sequentially except for the destinations of green arrows
	in Figure~\ref{img:figure7}, which is bounded by accessing the same vertices in the tree.
	During the backward scan the same vertices are accessed out of order and the same argument applies;
	the array $D$ is used similarly to $I$, $J$, $K$.
	Lemma~\ref{lem:subtree_build} applies to the rest.
\end{proof}

\section{Packed-Memory Array}
\label{sect:pma}
\subsection{Original Construction}
\label{sect:original_pma}

Each cell of PMA $\P$ either contains an~element or is \textit{blank} and elements are stored in $\P$ in the same order as in the original sequence.
We conceptually split $\P$ into \textit{segments} of size $\Theta(\log |\P|)$ so that their number is a power of two.
We imagine a~perfectly balanced binary tree with leaves corresponding to the segments.
We call its vertices \textit{nodes} not to confuse them with vertices stored inside PMA.
Each inner node then corresponds to a subarray of $\P$.
We define the \textit{capacity} of a node as the~size of its subarray
and the \textit{density} as the ratio between the number of stored elements in the subarray and the capacity.
Now we impose a density constraint on the nodes based on their depth;
both lower and upper bounds are getting more strict when going upwards.
In~particular, we arbitrarily define constant leaf-level bounds and more strict root-level bounds
and then linearly interpolate the intermediate levels according to their number.

When inserting or removing an element,
the simplest case is if the resulting density of the corresponding segment still meets its constraint.
We then \textit{rebalance} only that segment by~evenly distributing all the elements within it.
In the other case, we find the~bottommost node satisfying its constraint and rebalance that node,
which fixes all the~constraints in~its~subtree.
Since the tree is only virtual and nowhere stored,
we calculate the densities of~nodes by incrementally scanning the array, in both directions.
The final uniform redistribution of the~elements also requires a constant number of sequential scans.
If the density in~the~root is outside of its interval, we rebuild the whole PMA.
The worst-case I/O complexity of~the~insert/delete operation is $\cO(K/B + 1)$,
where $K$ is the size of the subarray to be rebalanced; the~time complexity is $\cO(K)$.
The aforementioned amortized I/O complexity $\cO((\log^2 N)/B + 1)$ and time complexity $\cO(\log^2 N)$ follow from Lemma~\ref{lem:pma_amort}.
The proof is a~reformulated version of~the~analysis of Bender et al.~\cite{BDFC05}.

\begin{lemma}[PMA amortization]\label{lem:pma_amort}
	Let $K$ be the capacity of the PMA node to be rebalanced
	after violating the density constraint of its child during an insert/delete operation,
	or the~capacity of the root in the case of total rebuild.
	Let the worst-case complexity of the operation be $\cO(CK)$ for an arbitrary $C>0$ fixed throughout the existence of the PMA.
	Then the~amortized complexity is $\cO(C⋅\log^2 N)$.

	The amortized complexity of initialization with an arbitrary number of initial elements (or total rebuild)
	can be bounded by the worst-case complexity as it zeroes the potential.
\end{lemma}
\begin{proof}
	Let $h$ be the height of the PMA conceptual tree,
	the density of nodes on level $k$ is within the interval $[\rho_k,\tau_k]$,
	and the density of the root after total rebuild is within the interval $[\rho_{-1}, \tau_{-1}]$.
	The following is satisfied:
	$0<\rho_h<\rho_0<\rho_{-1}≤\tau_{-1}<\tau_0<\tau_h≤1$,\penalty -10000
	$\rho_k = \rho_0 - (\rho_0 - \rho_h)⋅k/h$ and
	$\tau_k = \tau_0 + (\tau_h - \tau_0)⋅k/h$ for $0<k<h$.

	We define a potential
	\[
		\Phi=cC⋅\log N ⋅\sum_{\textrm{node $n$}} \f{cap}(n) ⋅ \left\{
		\begin{array}{ll}
			\rho_{d(n)-1} - \rho(n)
				& \f{for } \rho(n) < \rho_{d(n)-1}, \\
			0
				& \f{for } \rho_{d(n)-1} ≤ \rho(n) ≤ \tau_{d(n)-1}, \\
			\rho(n) - \tau_{d(n)-1}
				& \f{for } \tau_{d(n)-1} < \rho(n), \\
		\end{array}
		\right.
	\]
	where the sum is over all PMA nodes,
	$\rho(n)$ is the density of the node $n$, $\f{cap}(n)$ its capacity, $d(n)$ its depth,
	and $c$ sufficiently large constant.

	Rebalancing can only decrease the contribution of each node in $\Phi$.
	The topmost node~$n$ originally violating its density constraint contributed before rebalancing at~least
	\[
		cC⋅\log N ⋅ \f{cap}(n) ⋅ \min(\rho_{d(n)-1} - \rho_{d(n)}, \tau_{d(n)} - \tau_{d(n)-1}) ∈ Ω(C⋅\f{cap}(n)) = Ω(CK);
	\]
	after rebalancing it decreases to zero,
	which pays for the operation.

	On the other hand, insertion or removal of one element increases the potential by at~most $cC⋅\log N$
	per each of the $\cO(\log N)$ PMA nodes containing the element.
	It results in~the~potential increase of $\cO(C⋅\log^2 N)$,
	which is the amortized complexity of the operation.
\end{proof}

Leo and Boncz briefly described how to modify PMA to support batch updates~\cite{LB19},
so that we can give it a sequence of insert/remove operations to be performed during one pass.%
\footnote{We will, actually, not mix inserts with removals in a batch, but there is no extra cost for allowing it.}
We use a similar approach extended by updating the pointers to items which are moved.

\subsection{Our Version}
\label{sect:our_pma}

We now introduce the changes to the PMA mentioned in Section~\ref{sect:overview},
so that it allows batch updates and calls our pointer recalculation procedure on vertices which will be moved.
The I/O and time complexity of PMA itself excluding the pointer recalculation remain asymptotically the same
since the amortized analysis from~Lemma~\ref{lem:pma_amort} still applies and each individual insert/remove pays still the same amount into the potential.

\bigskip
The input of our batch update algorithm consists of several insert/remove operations sorted according to their positions in memory.
An insert operation is a pair of a position in PMA and a number of items to be inserted there --
for each such insert operation the PMA update returns an array of pointers to the new items to be filled with the actual data.
A remove operation is defined by a range of elements to be removed.
All the lists we use in the following text are in fact arrays or extendable vectors if appending is needed.
We divide PMA updates into four phases.

In the first phase,
we identify all the consecutive parts of memory, \textit{memory intervals},
corresponding to the conceptual tree nodes
in~which items will be uniformly redistributed.
We start calculating the densities of nodes from the first place to be updated as usual.
During forward scans we take into account other updates occurring in that part of memory
and during backward scans we merge our memory interval with the previous one if we reach their common parent node.
After obtaining a memory interval we continue with the next unvisited update operation.
The output of this phase is the list of memory intervals for update
along with the final number of items in each of them.

In the next phase, we scan all the intervals again and compute new positions of the contained items including the newly inserted ones.
For the old items, we store the pointers to their new positions as part of them --
we keep a reserved space inside the items for this purpose.
For the new items, we create the arrays containing pointers to their final positions.
We note that none of the pointers are valid at this point since no movements have occurred yet.

Next, we call pointer recalculation for each interval.
It returns a list of changes in~pointers to~children inside the tree $\T$
to make the pointers valid after moving the vertices in~the~interval.
We will expand on pointer recalculation in~Section~\ref{sect:corobts_recalc}
and we will also analyze its impact on complexity there.
Briefly speaking,
amortization applies to the operations with the same I/O complexity as sequential scans;
other operations will be considered separately.

In the last phase, we move all the items to their final positions and
perform removals by several sequential scans.
Then, we update the pointers between the items according to the list from the previous paragraph;
we process the list in the same order it was generated,
so the complexity is upper-bounded by the complexity of the pointer recalculation phase.
Finally, we initialize newly created items.

As a special case, the need for extending or shortening the memory may arise
during the initial intervals identification
to keep the size of $\P$ within a constant factors of the number of items.
In that case, we just distribute the elements uniformly in the newly allocated memory
and recalculate pointers to~all of~them.
This will not affect the complexity when choosing the density bounds correctly.
For more details see Algorithms~\ref{alg:PMABatchUpdate}, \ref{alg:PMAGetIntervals} and~\ref{alg:PMACalcNewPositions}.

We get the following complexity of PMA updates excluding pointer recalculation;
we compute the total complexity of PMA in Theorem~\ref{thm:pma+dfs}.

\begin{lemma}[PMA complexity]\label{lem:pma}
	Consider a batch PMA update consisting of $L$ operations
	where each is either an insert of multiple elements at one place
	or a removal of multiple consecutively stored elements;
	let $S$ be the total number of items to be inserted or removed.

	Then the updated memory is divided into at most $L$ memory intervals
	and for each of them pointer recalculation is called.
	The total I/O complexity excluding the pointer recalculation phase is
	$\cO(S⋅(\log^2 N)/B + L)$ amortized,
	which follows from the worst-case complexity
	$\cO(K/B + L)$,
	where $K$ is the sum of sizes of the affected memory intervals.
	The total time complexity excluding the pointer recalculation phase is
	$\cO(S⋅\log^2 N)$ amortized or $\cO(K)$ worst-case.

	The I/O complexity of the initialization with $N$ initial elements is $\cO(N/B + 1)$,
	the time complexity is $\cO(N)$ and no pointer recalculation is needed.
\end{lemma}

We note that the novelty of the statement is in the bounded number of memory intervals
for which pointer recalculation is called.
In the context of a persistent array, we use a little different form:

\begin{lemma}[PMA complexity 2]\label{lem:pma2}
	Consider a batch PMA update consisting of $L$ operations
	where each is either an insert of multiple elements at one place
	or a removal of multiple consecutively stored elements;
	let $S$ be the total number of items to be inserted or removed.

	Let us choose $L$ items of PMA containing all places for insertions and one element from each interval for removal.
	Let all these items be contained within $C$ cache blocks.
	Then we can express the~amortized I/O complexity as $\cO(S⋅(log^2 N)/B + C)$.
	If all these items along with $Θ(B)$ space before and after each of them are preloaded in the memory,
	then the amortized complexity is only $\cO(S⋅(\log^2 N)/B)$.
	Both are considered without the pointer recalculations.%
\end{lemma}

\section{The \corobts{} Data Structure}
\label{sect:corobts}
In this section,
we combine the van Emde Boas memory layout of Section~\ref{sect:veb}
with the packed memory array of Section~\ref{sect:pma}
to obtain a dynamic data structure for storing $(a,b)$-ary trees.

We store the tree in the vEB layout in PMA.
Each vertex $v$ stores its depth $d(v)$,
pointers to children, reserved space for the pointer to its new position used during PMA updates,
and other constant-sized data such as keys for searching.
We can observe that the root of $\T$ as an entry point is the~first vertex in $\P$.

A general procedure for an update is as follows:
We let PMA reserve enough space for new vertices at right places in memory if needed.
Then we perform tree changes such as (re)building or removing a subtree.
Finally, we let PMA remove memory intervals containing removed vertices if there were any.
The PMA update itself first plan the necessary memory movements
and tells us which vertices should be moved and where,
we then update the pointers leading there
and let PMA perfom the movements.
Except for subtree rebuild, PMA is updated just once per tree update.

In Section~\ref{sect:corobts_recalc},
we first show how the pointer recalculation as part of the PMA updates works
and prove the complexity of PMA updates including pointer updates.
Then we describe and analyze initialization, subtree insertion and removal and subtree rebuild
in Sections~\ref{sect:corobts_init}, \ref{sect:corobts_insert_remove} and~\ref{sect:corobts_rebuild},
respectively.

\subsection{Pointer Recalculation within PMA updates}
\label{sect:corobts_recalc}

When the PMA needs to move the vertices within a single memory interval,
which may occur multiple times in sequence during the PMA update,
it tells us the new positions and we need to update the pointers leading there.
Namely, we get a memory interval $P⊆\P$ whose vertices will be moved and
pointers to their new positions stored inside the vertices.
We also get a preallocated array, in which we will append a list of changes in pointers to children to be performed after all the movements.
Recall that we have no pointers to parents,
so we have to traverse the tree from the root to be able to modify pointers leading to children in that contiguous memory.

We use depth-first search in left-to-right tree order
visiting just the vertices in $P$ and their ancestors;
we described the navigation in the tree in Section~\ref{sect:veb_dfs_alg}.
During DFS, we maintain a stack $S$ (stored as array) of all ancestors of current vertex,
so~we can easily return there without parent pointers.
For each visited vertex in $P$, we update the~pointer leading there from its parent,
or more precisely we just schedule this update by appending it to the preallocated array, not~to~break the current tree structure.
The actual updates will be performed later in the same order we process vertices during DFS.
See Algorithm~\ref{alg:RecalculatePointers} for details.

\begin{theorem}[PMA+DFS complexity]\label{thm:pma+dfs}
	Consider a batch PMA update consisting of $L$ operations
	where each is either an insert of multiple elements at one place
	or a removal of~multiple consecutively stored elements;
	let $S$ be the total number of items to be inserted or removed.
	Then the total amortized I/O complexity is $\cO(S⋅(\log^2 N)/B + L⋅(\log_B N)/ε + 1)$
	and the total amortized time complexity is $\cO(S⋅\log^2 N)$.
\end{theorem}
\begin{proof}
	By Lemma~\ref{lem:pma},
	the pointer recalculation is called on at most $L$ memory intervals whose sum of~sizes is $K$,
	as defined in the lemma.
	The worst-case I/O complexity of all the pointer recalculations is then $\cO(K/B + L⋅(\log_B N)/ε + L)$, by Theorem~\ref{thm:dfs}.

	We can use the amortization argument from Lemma~\ref{lem:pma} on $\cO(K/B + L)$ part of the~previous complexity,
	which yields the amortized I/O complexity of all pointer recalculations $\cO(S⋅(\log^2 N)/B + L⋅(\log_B N)/ε + L)$.
	It also asymptotically bounds the rest of PMA updates from the same lemma and so represents the total I/O complexity.
	We can replace the~last term, $L$, by 1,
	because it would be greater than $L⋅\log_B N$ only if $N<B$ and so the whole data structure fits into constant number of blocks.
	Similarly, we get the time complexity $\cO(S⋅\log^2 N + L⋅\log N) ⊆ \cO(S⋅\log^2 N)$.
\end{proof}

\subsection{Initialization}
\label{sect:corobts_init}

\textit{Initialization} of a new $a$-ary tree requires just its height as an argument;
to initialize a general tree the simulation of its DFS has to be provided according to the Section~\ref{sect:veb_general_build}.
We just initialize PMA with enough space and build there the tree as described in Section~\ref{sect:veb_build}.
See Algorithm~\ref{alg:Init} for details.

\begin{theorem}[initialization complexity]\label{thm:init}
	The I/O complexity of data structure initialization is $\cO(N/B + 1)$,
	where $N$ is the size of the whole tree.
	The time complexity is linear.
\end{theorem}
\begin{proof}
	The I/O complexity of PMA initialization is $\cO(N/B + 1)$ and the time complexity is linear by
	Lemma~\ref{lem:pma}.
	The I/O complexity of subtree build is
	$\cO(N/B + (\log_B N)/ε + 1) ⊆$ $\cO(N/B + 1)$
	and time complexity is linear
	by Lemma~\ref{lem:subtree_build} for fixed-arity tree
	or Lemma~\ref{lem:general_subtree_build} for general tree.
\end{proof}

\subsection{Subtree insertion and removal}
\label{sect:corobts_insert_remove}

\textit{Inserting} an $a$-ary subtree $T$ requires its parent vertex and the index between its siblings,
where $T$ has to be built, it can be filled by other data afterwards;
inserting a general subtree according to its simulated DFS is also possible.
We first traverse the rightmost branch of the left sibling of $T$
or the leftmost branch of the right sibling of $T$
to find where in memory the space for the memory intervals of $T$
has to be reserved according to Lemma~\ref{lem:subtree_memory}.
The lemma also tells us how many levels each memory interval contains,
so we can compute their sizes in case of an $a$-ary tree;
for the general case the computation is described in Section~\ref{sect:veb_general_build}.
Then, we call PMA update to reserve the needed space.
Finally, we build there the new subtree as described in Section~\ref{sect:veb_build}.
See Algorithm~\ref{alg:InsertSubtree} for the $a$-ary tree insert.

\textit{Removing} a subtree $T$ requires just its root vertex.
We first traverse the leftmost and rightmost branches of $T$
to identify its memory intervals according to Lemma~\ref{lem:subtree_memory}.
Then, we remove the pointer to $T$ from its parent
and call PMA update to remove the memory intervals.
See Algorithm~\ref{alg:RemoveSubtree} for details.

\begin{theorem}[subtree insert/remove complexity]\label{thm:io}
The amortized I/O complexity of subtree insertion or subtree removal
is $\cO(S⋅(\log^2 N)/B + (\log_B N ⋅ \log\log S)/ε + 1)$,
where $S$ is the size of the subtree to be created or removed and $N$ the size of the whole tree $\T$ after insertion or before removal.
The amortized time complexity is~$\cO(S⋅\log^2 N)$.
\end{theorem}
\begin{proof}
	Traversing a branch has the I/O complexity $\cO((\log_B N)/ε + 1)$ by Lemma~\ref{lem:search_complexity}.

	The subtree may be divided into $\cO(\log\log S)$ memory intervals by Lemma~\ref{lem:subtree_memory}
	and so the I/O complexity of PMA update is
	$\cO(S⋅(\log^2 N)/B + (\log_B N ⋅ \log\log S)/ε + 1)$
	and time complexity $\cO(S⋅\log^2 N)$
	by Theorem~\ref{thm:pma+dfs}.
	In case of removal, we consider the complexity of the DFS in the pointer recalculation
	as if the subtree for removal is still there to avoid analysis of the holes in the memory.

	The subtree build has the I/O complexity
	$\cO(S/B + (\log_B N)/ε + 1)$ and linear time complexity
	by Lemma~\ref{lem:subtree_build} for fixed-arity subtree
	or Lemma~\ref{lem:general_subtree_build} for general subtree.
\end{proof}

\subsection{Subtree rebuild}
\label{sect:corobts_rebuild}

\textit{Rebuilding} a subtree uses our general build procedure described in Section~\ref{sect:veb_general_build}.
We first simulate DFS of the resulting subtree, which we require to be possible,
and create the array $L$ containing the vertices in preorder.
Then we scan all the memory intervals,
insert new vertices to intervals where they are missing,
and precompute the array $M[][]$ for indexing the vertices.
Finally, we build there the new subtree
and then remove superfluous vertices from the intervals.

\begin{theorem}[complexity of subtree rebuild]\label{lem:rebuild_io}
	The I/O complexity of rebuilding a subtree of the target size $S$ is
	$\cO(S/B + k⋅(\log^2 N)/B + l⋅(\log_B N)/ε + ε^{-2})$,
	where $k$ is the sum of changes of number of vertices in each memory interval
	and $l≤k$ is the number of memory intervals where inserts/removals occurred.
	The time complexity is $\cO(S + k⋅\log^2 N)$.
	We can upperbound the complexity by counting each subtree level as an individual memory interval.
\end{theorem}
\begin{proof}
	The subtree build itself has complexity $\cO(S/B + ε^{-2})$ as proven in Lemma~\ref{lem:general_subtree_build}.
	Identifying relevant memory intervals and constructing the array $M[][]$
	has the I/O complexity bounded by the DFS of the subtree and sequential scan of the intervals,
	which is also $\cO(S/B + ε^{-2})$ by Theorem~\ref{thm:dfs}.
	Inserting/removing $k$ vertices in $l$ memory intervals has complexity
	$\cO(k⋅(\log^2 N)/B + l⋅(\log_B N)/ε + 1)$ by Theorem~\ref{thm:pma+dfs}.
	The time complexity follows from the same claims.
\end{proof}

Rebuilding several sibling subtrees works similarly and leads to the same complexity,
where $S$ is the sum of their sizes,
because memory intervals of two sibling subtrees containing the same tree levels are beside each other in memory.

\section{Different Variants and Future Work}
\label{sect:variants}
\subsection{Randomized PMA}

In the course of writing this paper,
Bender et al.~\cite{BCFCKKW22} proposed a new solution to~the~online list labeling problem
improving the number of relabelings per update in linear space from~$\cO(\log^2 N)$ to expected $\cO(\log^{1.5} N)$.
An alternative PMA can be based on~the~result.

The new PMA still divides its memory into logarithmically-sized segments and builds a~binary tree over them.
The main difference from the version described in Section~\ref{sect:pma} is
that during update the tree is traversed in the top-down direction,
i.e. from the root to~the~node to be rebalanced, which contains the insert/delete position.
In each traversed node, it is determined whether the rebalancing occurs there
according to the density of the node, other constant-sized node data and randomness.

In the original PMA, we computed the densities on the fly by scanning the subarrays of~the~traversed nodes,
which had no impact on the asymptotic complexity as we traversed the~tree in the bottom-up direction.
We cannot replicate it here, so we have to store the~densities along with the whole tree explicitly,
which incurs other memory transfers.
If~we store the~tree in the vEB layout described in Section~\ref{sect:veb},
we can perform updates in an expected I/O complexity $\cO((\log^{1.5} N)/B + (\log_B N)/ε + 1)$
instead of the amortized $\cO((\log^2 N)/B + 1)$ of the original PMA.

When inserting or removing a sequence of $S$ consecutive elements,
it is sufficient to~traverse the tree only once and
so the complexity is $\cO(S⋅(\log^{1.5} N)/B + (\log_B N)/ε + 1)$.
The~complexity of inserting or removing a subtree of size $S$ in \corobts{} then decreases to~%
$\cO(S⋅(\log^{1.5} N)/B + (\log_B N⋅\log\log S)/ε+1)$ and become expected instead of amortized.

To use \corobts{} based on the new PMA in the persistent array,
the analysis has to handle the PMA tree in the same way as the stored tree w.r.t. the cache;
i.e. if a consecutively stored subtree $T$ of the top ST-tree has to be kept in the cache,
leaves of the PMA tree of~the~segments intersecting $T$ and their ancestors will be also kept in the cache.
We get the~amortized expected complexity with the appropriately lowered exponent in logarithm.

\subsection{Other Directions}

Bender and Hu~\cite{BH07} developed an \textit{adaptive packed-memory array},
which improves the I/O complexity of~specific patterns of updates from $\cO((\log^2 N)/B)$ to $\cO((\log N)/B)$.
It does not apply directly, but further analysis may get it useful.

In 2006, Bender et al.~\cite{BFCK06} came with an idea of aligning PMA segments to subtrees.
This approach may help avoiding the~$\cO(\log_B N)$ term in \corobts{} I/O complexity,
but it cannot be used straightforwardly due to its requirement of~same-height subtrees having similar size,
which is not fulfilled here.

Other improvements of packed memory array were proposed by Leo and Boncz~\cite{LB19}.

Finally, there is a data structure similar to PMA which permits some segments to be swapped
improving the~I/O complexity of updates to $\cO((\log\log N)^{2+ε})$ \cite{BCDF02}.
Replacing PMA with this data structure may possibly also improve the I/O complexities of updates,
but it will make analysis much more complicated.

As a minor improvement, using the hierarchical memory layouts of Lindstrom and Rajan~\cite{LR14} instead of the vEB layout
may improve search complexity by a~constant factor.

\section{Application of \corobts{} in Persistent Array}
\label{sect:persistent_array}
\paragraph*{The original construction of \cite{DFIO14}}

We think of writes as of points in two dimensional space-time,
where indices of cells in~the~array grows to the right and versions in the up direction;
each write overwrites one cell and creates a~new version.
These points are stored in so called \textit{space-time trees} (or \textit{ST-trees}).
We denote $U$ the size of the array in one version and $V$ the total number of writes.

Each node of an ST-tree represents a space-time rectangle.
It is open in the up direction
if~there is nothing above it;
we can still write into such a rectangle.
Children of a node partition its rectangle.
Roots of ST-trees represent rectangles of width $U$ (whole array)
and height at~most~$U$.
After each $U$ writes the rectangle of the root of \textit{top} ST-tree is closed
and a new ST-tree is created above it.
We call all ST-trees under the top one \textit{bottom}.

A subtree of height $h$ containing leaves represents a rectangle of width $2^{h-1}$;
all leaves are at the same depth.
At the beginning, an ST-tree is binary, has just $U$ leaves, which is a power of~two,
and each of them represents one cell of the array.
When needed, we can add somewhere a third child,
which has the same horizontal boundary as one of the original children and is above it;
it thus stores newer versions of the same subarray.

Each node stores the coordinates of its rectangle.
When we add a new node representing a~rectangle above another one,
we must \textit{close} the bottom rectangles by adjusting their upward boundaries,
which were originally infinite.
Each leaf stores a value of its cell at the time of~the~bottom boundary of its rectangle,
and if a point is contained within the rectangle,
the~leaf stores also a value of that single point.

During persistent read, we just need to navigate to the leaf whose rectangle contains the~given coordinates~--
cell index and version.
The leaf either contains the wanted point or the write occurred somewhere below it
and we return the value at the bottom boundary.
We store a~pointer to the leaf of the last persistent read query
and on next query start navigating from this point to speed-up nearby reads.

We say a leaf is \textit{full} if it contains a point;
an internal node is full if it has two full children.
When we add a point to a leaf,
the leaf and possibly some of its ancestors become full.
We~find the bottommost non-full ancestor and \textit{expand it}.
Expansion means that we add to~the~node a third child
whose rectangle will be above the full one
and initialize its newly created binary subtree.
It involves copying values on its bottom boundary to its leaves.
We also close all the open rectangles under the newly created one,
including the newly full nodes.
After the~operation, all the open rectangles are non-full.

To maintain locality, ST-trees are stored in van Emde Boas memory layout
parametrized by~$0<ε≤1/2$. %
To allow efficient insertions of new subtrees,
the top ST-tree is stored as the complete ternary tree of height $\log U + 1$.
It thus occupies $Θ(U^{\log_2 3})$ of space,
but at most $\cO(U\log U)$ is actually used.
When a new top ST-tree is created, the~previous one is \textit{compressed}
so that it occupies only the space which is needed.

Apart from ST-trees, the data structure contains also an ordinary array with the present version
and pointers from each its cell to the corresponding ST-tree leaves representing the~present version of the cell.
Present reads are then handled directly by the ordinary array
and writes can begin with overwriting the value and following the pointer to the open present leaf
where the new value will be also written.
The last part is the log of all writes which have been performed so far,
which is used for rebuilding the data structure if the array is resized.

\begin{lemma}[Davoodi et al., \cite{DFIO14}]\label{lem:pa_2}
	Any induced binary tree of height $h$ will be stored in $\cO(1 + {2^h/ B^{(1-ε)/ \log 3}})$ blocks.
\end{lemma}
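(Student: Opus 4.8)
The plan is to bound the number of blocks occupied by an induced binary tree $R$ of height $h$ by analyzing how it is split by the van Emde Boas recursion into pieces that are each stored contiguously. First I would recall that in the vEB layout parametrized by $\varepsilon$, a tree of height $h$ is recursively cut between levels roughly $\varepsilon h$ and $\varepsilon h + 1$, so after $\cO(\log\log h)$ levels of recursion one reaches decomposition subtrees of height $\cO(\log B)$ or smaller; the precise threshold I want is the recursion level at which subtree height first drops to at most some value $h_0$ with $2^{h_0} \le B^{(1-\varepsilon)/\log 3}$, i.e. $h_0 = \Theta(\log B \cdot (1-\varepsilon)/\log 3)$.

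The key steps, in order: (1) Fix $h_0$ as above and consider the decomposition $\D$ restricted to $R$, stopping the recursion as soon as a decomposition subtree has height at most $h_0$; call these the \emph{base subtrees}. Each base subtree is stored contiguously in memory and, being binary of height at most $h_0$, has at most $2^{h_0} \le B^{(1-\varepsilon)/\log 3} \le B$ vertices, so it occupies $\cO(1)$ blocks. (2) Count the base subtrees: since $R$ is binary with $\le 2^h$ leaves, and the vEB recursion splits a height-$h$ tree into a top tree of height $\approx \varepsilon h$ and bottom trees whose heights sum appropriately, an induction on $h$ shows that the number of base subtrees of an induced binary tree of height $h$ is $\cO(1 + 2^h/2^{h_0}) = \cO(1 + 2^h/B^{(1-\varepsilon)/\log 3})$ — the ``$+1$'' absorbing the case where the whole tree is smaller than a base subtree. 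The cleanest way is probably to observe that the base subtrees at the bottom level partition the leaves of $R$ into groups, each base subtree accounting for at least one (in fact $\ge 2^{\lfloor \varepsilon h_0\rfloor}$, but $\ge 1$ suffices) leaf, while the base subtrees that are \emph{not} at the very bottom are internal nodes of the recursion tree and hence fewer than the bottom ones. (3) Multiply: total blocks $\le \cO(1)$ per base subtree times $\cO(1 + 2^h/B^{(1-\varepsilon)/\log 3})$ base subtrees, giving the claimed $\cO(1 + 2^h/B^{(1-\varepsilon)/\log 3})$.

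I expect the main obstacle to be the bookkeeping in step (2): making the induction on $h$ clean when $\lfloor \varepsilon h\rfloor$ behaves irregularly for small $h$ (the $\max(\lfloor\varepsilon h\rfloor,1)$ clamp in Definition~\ref{def:van_emde_boas}), and ensuring the recursion genuinely terminates at height $\le h_0$ rather than overshooting by more than a constant factor — one must check that a subtree of height just above $h_0$ is cut into pieces still of height $\Theta(h_0)$, not height $1$, so that the count of base subtrees really is $\cO(2^h/2^{h_0})$ and not larger. A secondary subtlety is that $R$ is only an \emph{induced} subtree of the stored ternary/B-tree, so its decomposition subtrees sit inside the ambient decomposition; but since induced subtrees inherit the contiguity and the height bounds (a piece of the ambient layout of height $\le h_0$ still spans $\le B^{(1-\varepsilon)/\log 3}$ positions regardless of how many are actually occupied by $R$), this causes no real trouble — it only means we count blocks of the ambient array, which is exactly what the statement asks.
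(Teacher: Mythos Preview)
The paper does \emph{not} prove this lemma: it is explicitly imported from Davoodi et al.~\cite{DFIO14} with the remark ``the following lemmas are proven in~\cite{DFIO14}.'' So there is no paper-side proof to compare against. That said, your outline follows the standard vEB-layout argument that \cite{DFIO14} uses, and the high-level plan (stop the recursion at block-sized pieces, count the pieces) is right. But your execution has a genuine error in where the $(1-\varepsilon)$ factor comes from.

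You set the stopping threshold at $h_0$ with $2^{h_0}=B^{(1-\varepsilon)/\log 3}$, i.e.\ $h_0=(1-\varepsilon)\log_3 B$, and then claim the number of base subtrees is $\cO(2^h/2^{h_0})$. This count is false with that threshold: take $h=h_0+1$. The single cut produces one top piece plus $2^{\lfloor\varepsilon(h_0+1)\rfloor}\approx 2^{\varepsilon h_0}$ bottom pieces, so $N(h_0+1)\approx 2^{\varepsilon h_0}$, whereas your bound predicts $\cO(2^{h_0+1}/2^{h_0})=\cO(1)$. The induction you sketch does not close. The correct threshold is $k=\log_3 B$: this is the largest height at which the \emph{ambient ternary} decomposition subtree (size $\Theta(3^k)=\Theta(B)$) still fits in $\cO(1)$ blocks. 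The $(1-\varepsilon)$ then enters not as the threshold but as the \emph{lower bound} on the height of the leaf-level base subtrees --- a base subtree containing leaves of $R$ is necessarily a bottom child of a parent of height $>k$, hence has height $>(1-\varepsilon)k$, so the induced binary subtree inside it has $\ge 2^{(1-\varepsilon)k}=B^{(1-\varepsilon)/\log 3}$ leaves of $R$. That gives at most $2^h/B^{(1-\varepsilon)/\log 3}$ leaf-level base subtrees; the non-leaf ones are top children of internal nodes of the truncated recursion tree and are therefore no more numerous.

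A secondary slip: you write that an ambient piece of height $\le h_0$ ``spans $\le B^{(1-\varepsilon)/\log 3}$ positions.'' The ambient layout is ternary, so it spans $\Theta(3^{h_0})=\Theta(B^{1-\varepsilon})$ positions, not $2^{h_0}$. This is still $\le B$, so the ``fits in $\cO(1)$ blocks'' conclusion survives, but it shows the same confusion between the binary count (what $R$ contributes) and the ternary count (what determines memory span) that underlies the threshold error.
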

\begin{lemma}[Davoodi et al., \cite{DFIO14}]\label{lem:pa_3}
	The amortized number of times a node gains a third child following an~insertion into its subtree of height $h$ is $\cO(1/2^h)$.
\end{lemma}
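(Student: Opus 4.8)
The plan is a straightforward amortization. The key quantitative fact to establish is that a node whose subtree has height $h$ can acquire its third child only after at least $\Omega(2^h)$ insertions have already been made into its subtree; since ST-trees are $(2,3)$-ary, a node acquires a third child \emph{at most once}, so the amortized number of such events per insertion into a height-$h$ subtree is $\cO(1/2^h)$. The aggregate form of the statement (needed later for the I/O analysis, where one sums the cost of initializing all the newly created height-$h$ subtrees) then follows because the height-$h$ subtrees of distinct nodes are pairwise disjoint, so the $\Omega(2^h)$ insertions charged to different nodes never overlap.

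First I would record the structural facts I need. Heights of nodes in an ST-tree never change: an expansion adds a third child of the \emph{same} height as the node's original two children, and new ST-trees are stacked on top rather than altering existing ones; hence any node strictly above a height-$h$ node has height $>h$, so distinct height-$h$ nodes are incomparable and their subtrees are pairwise disjoint (subtrees living in different ST-trees being disjoint trivially). Moreover, fullness is monotone — a full internal node has two full children and a point is never removed from a leaf — so once a node is full it stays full, and by the closing rule no insertion ever descends into a subtree whose root is already full.

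Now the counting lemma: a full subtree of height $k$ contains at least $2^{k-1}$ points. By induction on $k$ — a full leaf holds one point; a full internal node of height $k$ has two full children, each of height $k-1$ (all children of a height-$k$ node, original or third, have height $k-1$), each holding at least $2^{k-2}$ points in disjoint regions, for a total of at least $2^{k-1}$. Every point is created by exactly one write: the copying performed during an expansion moves cell \emph{values} along bottom boundaries, not points, and compressing or creating ST-trees manufactures no points. Thus a full subtree of height $k$ certifies at least $2^{k-1}$ distinct prior insertions inside it. Finally, when a node $v$ of subtree-height $h$ gains a third child, the child $c$ of $v$ on the path to the insertion that triggered the expansion is full at that instant — it caps the chain of full ancestors at which the ``bottommost non-full ancestor'' rule stops — and $c$ has height $h-1$; hence at least $2^{h-2}$ insertions have already landed inside $v$'s subtree, which is the $\Omega(2^h)$ bound. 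The small cases $h=1$ (a leaf has no children, so the count is $0$) and $h=2$ only make the bound easier, and the root of an ST-tree is a boundary case that can be checked directly.

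The step I expect to require real care is reconciling the informal operational description with the premises of the counting lemma: making precise, at the moment of an expansion, that a genuinely full subtree of height one less sits beneath the expanded node — in particular pinning down what ``full'' and ``closed'' mean for a node that already carries a third child, so that the induction and the monotonicity claims are actually clean — and confirming that the $\Omega(2^h)$ points it contains correspond to $\Omega(2^h)$ \emph{distinct} insertions rather than to values re-copied during earlier expansions. Once those semantic points are settled, the disjointness of same-height subtrees and the final division are routine bookkeeping.
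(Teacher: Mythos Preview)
The paper does not prove this lemma at all: it is one of three results explicitly imported from Davoodi et al.\ \cite{DFIO14} (``The following lemmas are proven in \cite{DFIO14}\dots''), so there is no in-paper proof to compare against. Your argument is a correct self-contained proof of the statement and is very likely the same one that appears in \cite{DFIO14}: the key observations --- that a node can be expanded at most once, that expansion at $v$ forces a full child of height $h-1$, and that a full height-$k$ subtree holds at least $2^{k-1}$ points each created by a distinct write --- are exactly the natural ones, and the disjointness of equal-height subtrees is what justifies using the bound additively in Lemma~\ref{lem:writes_rest_io}. Your caveat about distinguishing copied boundary \emph{values} from genuine points is well taken and is indeed the only place one has to be careful; the paper's description (``copying values on its bottom boundary to its leaves'') confirms your reading.
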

\begin{lemma}[Davoodi et al., \cite{DFIO14}]\label{lem:pa_10}
	Compression of the top tree into a bottom tree costs $\cO(U\log U / B^{(1 - ε)/ \log3})$ block transfers.
\end{lemma}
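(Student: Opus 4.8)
The plan is to view compression as a single re-layout: the used vertices of the old top tree form a rooted subtree $T'$ of the complete ternary tree, with $|T'| = \cO(U\log U)$ and height $\log U + 1$. Compression reads all of $T'$ and writes it, in the van Emde Boas order of a tree on $|T'|$ vertices, into a fresh interval of $\Theta(U\log U)$ consecutive cells, while rewriting the inter-vertex child pointers and the $\cO(U)$ back-pointers from the present array to the present leaves; the new interval is a bottom tree of the required size and the old $\cO(U^{\log_2 3})$-cell region is freed.

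First I would dispose of the cheap directions. Writing $T'$ into $\Theta(U\log U)$ consecutive cells in van Emde Boas order is one left-to-right fill costing $\cO(U\log U/B + 1)$ block transfers, and updating the $\cO(U)$ present-array back-pointers in tandem with the leaf traversal costs $\cO(U/B + 1)$; both are dominated by $\cO(U\log U / B^{(1-ε)/\log 3})$ since $(1-ε)/\log 3 \le 1$ and hence $B^{(1-ε)/\log 3} \le B$.

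The heart of the argument is bounding the cost of reading $T'$ out of the old van Emde Boas layout of the complete ternary tree. I would perform the read as a left-to-right depth-first traversal of $T'$, following only used child pointers, and charge its block transfers to the footprint of $T'$ in the old layout. To bound that footprint, decompose $T'$ into maximal induced binary subtrees by cutting at every vertex that carries a used third child: the initial ST-tree contributes one complete binary tree of height $\log U + 1$, and each later expansion contributes one complete binary subtree, so the pieces are complete binary trees whose sizes sum to $|T'| = \cO(U\log U)$, and in particular $\sum_i 2^{h_i} = \cO(U\log U)$ where $h_i$ is the height of the $i$-th piece. By Lemma~\ref{lem:pa_3}, summed over all vertex heights $1 \le h \le \log U + 1$ and the $U$ insertions into the top tree, the total number of third children ever added is $\cO\!\left(U \sum_{h\ge 1} 2^{-h}\right) = \cO(U)$, so there are only $\cO(U)$ pieces. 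Applying Lemma~\ref{lem:pa_2} to each piece and summing gives a footprint of $\cO\!\left(\#\text{pieces} + \sum_i 2^{h_i}/B^{(1-ε)/\log 3}\right) = \cO\!\left(U + U\log U/B^{(1-ε)/\log 3}\right)$; a sharper accounting of the small pieces — those whose allocated positions lie within a single block of the old layout, which can therefore share one transfer — removes the linear term and matches the claimed bound. Adding the read, write and pointer-update costs yields $\cO(U\log U/B^{(1-ε)/\log 3})$.

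I expect the main obstacle to be this footprint estimate. Two points need care: first, that the depth-first read of $T'$ touches each block of the old layout only $\cO(1)$ times (the analogue, for the layout of \cite{DFIO14}, of Lemma~\ref{lem:small_dec_subtrees}'s ``split at most one way'' argument, which also needs the active root-to-leaf path's $\cO(\log U/\log B)$ blocks to fit in cache); and second, that the decomposition into complete binary pieces, together with the additive ``$+1$'' terms of Lemma~\ref{lem:pa_2}, telescopes to $\cO(U\log U/B^{(1-ε)/\log 3})$ rather than accumulating an extra factor.
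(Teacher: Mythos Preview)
The paper does not prove this lemma. It is quoted verbatim from \cite{DFIO14} (see the sentence preceding Lemmas~\ref{lem:pa_2}--\ref{lem:pa_10}: ``The following lemmas are proven in \cite{DFIO14} and apply also to our modified version''), so there is no in-paper argument to compare your proposal against.

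On the merits of your attempt: the decomposition of $T'$ into complete binary pieces by cutting third-child edges, and the use of Lemma~\ref{lem:pa_3} to bound their number by $\cO(U)$, are both sound. The genuine gap is exactly the one you flag yourself: summing the additive $+1$ from Lemma~\ref{lem:pa_2} over $\cO(U)$ pieces gives an $\cO(U)$ term, which is \emph{not} dominated by $U\log U / B^{(1-ε)/\log 3}$ when $B^{(1-ε)/\log 3} > \log U$. Your one-sentence ``sharper accounting'' does not close this: the small pieces are third-child subtrees attached at arbitrary nodes of the ternary layout, and you have not argued why many of them must land in the same physical block. To make that step go through you would need to group pieces by the height-$\Theta(\log_3 B)$ decomposition subtree of the ternary layout that contains their root and argue that each such decomposition subtree is charged $\cO(1)$ blocks regardless of how many small pieces it hosts; equivalently, bound the block footprint of $T'$ directly rather than piece by piece. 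The second obstacle you list (each block of the old layout is revisited only $\cO(1)$ times during the DFS read) is the more routine of the two and would follow from the same partition-by-a-branch reasoning used in Lemma~\ref{lem:small_dec_subtrees}.
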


\paragraph*{Our version}

We modify the persistent array by storing the topmost ST-tree in our \corobts{} data structure
with arity between 2 and 3.
This implies that there are no pointers to parents within the~tree
and so it is useless to store pointers to leaves from outside the tree.
We replace~the pointer leading to the leaf of the~last persistent read
by a \textit{finger}, which consists of pointers to all vertices on the branch containing that leaf.
We then use the finger as a stack of ancestors to~walk around the tree.
We also need to avoid pointers from cells of the array of the present version.
Here we use the same technique and maintain the second finger to the leaf of the~last written position.
We need to update the pointers of fingers during pointer recalculation,
which has no impact on the complexity as long as there are only constant number of fingers.

\begin{theorem}[space usage]\label{thm:pers_array_space}
	The space complexity of the persistent array with \corobts{} is
	$\cO(U + V\log U)$.
\end{theorem}
\begin{proof}
	An ST-tree has size $\cO(U)$ on its creation.
	By Lemma~\ref{lem:pa_3}, each point contributes by $\cO(2^h⋅1/2^h)$ of space to expansion of each of $\cO(\log U)$ ancestors of the corresponding leaf,
	where~$h$~is the height of the subtree to be inserted.
	Each point in the top ST-tree also contributes $\cO(1)$ space to the creation of a new ST-tree.
	The space required for all ST-trees is~$\cO(U + V\log U)$.

	The present array has $\cO(U)$ cells and the log of all writes $\cO(V)$ items.
\end{proof}

The complexity of present and persistent reads stays unaffected.

To compuete the I/O complexity of writes,
we first adapt our analysis of the \corobts{} data structure
to make it better fit here.
We need to take into account that some blocks are already present in cache when updating \corobts{}.
In~\cite{DFIO14}, it is also assumed that the cache has at least $d⋅\log_B U$ blocks
for a~sufficiently large constant $d$;
otherwise no cache is available to the ephemeral algorithm during persistent reads.
We will utilize this assumption, too.
The main point of the following technical lemma is
that we can pay only the amortized part of the I/O complexity of subtree insertion
if the subtree is sufficiently small
to be whole preloaded in cache along with other required data.
We also have to keep in cache all the data corresponding to the loaded ephemeral blocks
in spite of PMA movements.

\begin{lemma}[subtree insertion]\label{lem:subtree_insertion}
	The I/O complexity of subtree insertion into \corobts{}
	is~$\cO((S/B)⋅\log^2 U + C⋅(\log_B U)/ε)$,
	where $S$ is the size of the inserted binary subtree
	and $C≥1$ the number of blocks it will occupy,
	under the assumption that at least $d⋅\log_B U$ blocks fit into cache for a sufficiently large constant $d$,
	which depends on $ε$.

	Let us further assume that we have preloaded $\cO((\log_B U)/ε)$ blocks of internal data shared by all PMA updates,
	a memory interval containing the whole subtree of the parent node of the new subtree,
	including space for the new subtree and $Θ(B⋅(\log_B U)/ε)$ extra space at~the~beginning and end of the interval;
	let us also assume that the ancestors of all vertices stored in the interval are also loaded.
	Then the I/O complexity of subtree insertion is $\cO((S/B)⋅\log^2 U)$
	and the preloaded parts will be updated to correspond to movements of~the~vertices
	and creation of the new subtree, which will be also loaded.
\end{lemma}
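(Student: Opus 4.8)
The plan is to reduce the statement to the already-established \corobts{} complexity results, namely Theorem~\ref{thm:io}, Theorem~\ref{thm:pma+dfs_io}, Lemma~\ref{lem:pma_io2}, Lemma~\ref{lem:build_io} and Theorem~\ref{thm:dfs_io}, but with $N$ replaced by the ST-tree size $\cO(U\log U)$ (so $\log N=\cO(\log U)$ and $\log\log N=\cO(\log\log U)\subseteq\cO(\log U)$), and then to track exactly which memory blocks need to be touched. First I would recall from Section~\ref{sect:identifying_subtree_memory} and Section~\ref{sect:public_methods} that inserting an $a$-ary subtree of size $S$ splits into three phases: identifying the $\cO(\log\log S)$ memory intervals (done by walking the leftmost/rightmost branches of a subtree of an immediate sibling of the new subtree's parent), performing the batched PMA update together with pointer recalculation, and building the new subtree. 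Since the new subtree here is a binary subtree of size $S$ occupying $C$ blocks, I would note $\log\log S\le\log\log U$ and that all of phases one and three are dominated, in I/O terms, by $\cO(S/B+(\log_B U)/\varepsilon+1)$ via Lemma~\ref{lem:build_io} and by search complexity (Lemma~\ref{lem:search_complexity}).

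For the first (unconditional) bound, I would apply Theorem~\ref{thm:pma+dfs_io} with $L=\cO(\log\log S)$ intervals and total insertion size $S$: the amortized I/O cost is $\cO(S\cdot(\log^2 U)/B+L\cdot(\log_B U)/\varepsilon+1)$. The term $L\cdot(\log_B U)/\varepsilon=\cO(\log\log S\cdot(\log_B U)/\varepsilon)$ must be absorbed into $C\cdot(\log_B U)/\varepsilon$; here I would invoke Lemma~\ref{lem:pa_2}, which says the inserted induced binary tree of height $h$ occupies $\cO(1+2^h/B^{(1-\varepsilon)/\log 3})$ blocks, hence $C=\Omega(\ldots)$ and in any case $C\ge 1$ while $\log\log S$ is $\cO(\log S)=\cO(h)$, so $\log\log S$ is swallowed by $C$ up to the constant $d$ (which is allowed to depend on $\varepsilon$). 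Also $S\cdot(\log^2 U)/B=(S/B)\cdot\log^2 U$ matches the claimed leading term, and phase-one/phase-three costs $\cO(S/B+(\log_B U)/\varepsilon+1)\subseteq\cO((S/B)\log^2 U+C(\log_B U)/\varepsilon)$. That gives the first bound.

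For the second (conditional) bound, the point is that with the stated preloading we may switch from the ``$+C$'' accounting of Lemma~\ref{lem:pma_io} to the sharper ``$\cO(S\cdot(\log^2 N)/B)$'' accounting of Lemma~\ref{lem:pma_io2}: the $L=\cO(\log\log S)$ cells marking the insertion places all lie inside the single preloaded memory interval of the parent's subtree, and the $\Theta(B\cdot(\log_B U)/\varepsilon)$ padding around it guarantees that the $\Theta(B)$ neighbourhood of each such cell is already in cache, so the $\cO(C)$ blocking term vanishes and the PMA part costs $\cO(S\cdot(\log^2 U)/B)=\cO((S/B)\log^2 U)$. For the pointer-recalculation DFS (Theorem~\ref{thm:dfs_io}), the $\cO((\log_B N)/\varepsilon+1)$ additive term per interval is paid by the $\cO((\log_B U)/\varepsilon)$ preloaded blocks of shared internal data together with the preloaded ancestors of all vertices in the interval: every vertex touched by the DFS is either in the preloaded interval or is one of its preloaded ancestors, so the DFS incurs only $\cO(|P|/B)$ fresh transfers, which is $\cO(S/B)$. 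Finally the subtree-building phase (Lemma~\ref{lem:build_io}) similarly reduces to $\cO(S/B)$ once the whole destination region is preloaded, and its effect on the preloaded structures is exactly the claimed update. Summing, the conditional cost is $\cO((S/B)\log^2 U)$.

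The main obstacle I expect is the bookkeeping in the conditional part: one must argue carefully that the preloaded ``memory interval containing the whole subtree of the parent node'' genuinely covers (i) all $\cO(\log\log S)$ sub-intervals the new binary subtree decomposes into, (ii) the entire DFS-visited region for pointer recalculation on each of those intervals, and (iii) the full building region, and that the $\Theta(B\cdot(\log_B U)/\varepsilon)$ padding is enough to cover the extra cache-misses that Lemma~\ref{lem:pma_io2} and Theorem~\ref{thm:dfs_io} otherwise charge at interval boundaries. This uses the structural facts of Section~\ref{sect:identifying_subtree_memory} (successive memory intervals of a subtree nest inside the top parts, each at least doubling the considered height) to see that the intervals of the new subtree, and the ancestors reached by DFS, all sit within the parent's stored subtree plus a bounded-per-level margin; the margin is bounded by search complexity, i.e. $\cO((\log_B U)/\varepsilon)$ blocks, which is why the padding suffices. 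The remaining arithmetic — replacing $N$ by $\cO(U\log U)$, checking $\log\log S\le d\cdot C$, and collapsing lower-order terms — is routine.
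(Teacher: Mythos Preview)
Your argument for the first (unconditional) bound has a genuine gap. You try to absorb the term $L\cdot(\log_B U)/\varepsilon$ with $L=\cO(\log\log S)$ into $C\cdot(\log_B U)/\varepsilon$ by claiming that ``$\log\log S$ is swallowed by $C$''. This is false in general: take an inserted binary subtree of height $h$ with $1\ll h<(1-\varepsilon)(\log B)/\log 3$. By Lemma~\ref{lem:pa_2} such a subtree fits in $C=\cO(1)$ blocks, yet $\log\log S=\Theta(\log h)$ can be as large as $\Theta(\log\log B)$, which is not $\cO(1)$. Invoking Lemma~\ref{lem:pa_2} only gives an \emph{upper} bound on $C$, not the lower bound you would need; and the constant $d$ in the cache-size hypothesis is a fixed constant, not something that can absorb a $\log\log B$ factor.

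The paper obtains the $C\cdot(\log_B U)/\varepsilon$ term by a different mechanism that you are missing. Each of the (at most $L$) recalculation intervals produced by the PMA must intersect one of the $C$ blocks that will hold the new subtree. For a single pass on an interval intersecting a given block, the data needed beyond the $\cO(|P|/B)$ scan---namely that block, $\Theta(B\cdot(\log_B U)/\varepsilon)$ cells around it, and all ancestors of those cells---fits in $\cO((\log_B U)/\varepsilon)$ blocks (by Theorem~\ref{thm:dfs_io}). The cache-size assumption $M\ge d\cdot B\log_B U$ (with $d$ depending on $\varepsilon$) lets this context stay resident across consecutive passes that hit the \emph{same} block. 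Hence the context is loaded at most once per distinct intersected block, i.e.\ $C$ times, yielding $\cO((S/B)\log^2 U + C\cdot(\log_B U)/\varepsilon)$. This caching-between-passes argument is the key idea you have not supplied.

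Your treatment of the second (conditional) bound is closer to the paper's, but one step is stated incorrectly. You write that ``every vertex touched by the DFS is either in the preloaded interval or is one of its preloaded ancestors''. That need not hold: a PMA redistribution interval $P$ can extend well beyond the preloaded region. The correct dichotomy, which the $\Theta(B\cdot(\log_B U)/\varepsilon)$ padding is designed to force, is: either $P$ lies entirely inside the preloaded region (so the pass is free), or $P$ protrudes past the padding, in which case $|P|\ge\Theta(B\cdot(\log_B U)/\varepsilon)$ and therefore $|P|/B$ already dominates the additive $(\log_B U)/\varepsilon$ from Theorem~\ref{thm:dfs_io}. You allude to this in your ``obstacle'' paragraph but do not use it where it is needed.
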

\begin{proof}
	By Lemma~\ref{lem:pma2},
	changes in PMA cost $\cO((S/B)⋅\log^2 U + C)$ or $\cO((S/B)⋅\log^2 U)$ with preloaded memory,
	both excluding pointer recalculation.

	By Theorem~\ref{thm:dfs}, the complexity of one pass of pointer recalculation is $\cO(|P|/B + (\log_B U)/ε)$,
	where $P$ is the memory interval to be recalculated;
	by Theorem~\ref{thm:pma+dfs}, it sums up in an amortized sense to $\cO(S⋅(\log^2 U)/B + L⋅(\log_B U)/ε)$,
	where $L$ is the number of calls to pointer recalculation; we discarded additive constants here assuming $U>B$.
	For one pass of pointer recalculation it is sufficient to load
	all blocks of the recalculated memory interval, ancestors of all vertices contained there and some other data internally used by PMA,
	which either is contained within $\cO((\log_B U)/ε)$ blocks shared between all passes, or is bounded by $\cO(|P|/B)$.

	Each memory interval for recalculation intersects one of the $C$ blocks in which the new tree will be stored.
	For each pass, we load in cache the intersected block, $Θ(B⋅(\log_B U)/ε)$ vertices around it and the ancestors of all these vertices;
	that all occupy $\cO((\log_B U)/ε)$ blocks by Theorem~\ref{thm:dfs}, so we can keep it in cache between two consecutive passes if they share their intersected block.
	Now, all needed vertices for one pass are either preloaded or the $\cO(|P|/B)$ part dominates the complexity,
	so we can bound the amortized cost of PMA updates including~pointer recalculation by $\cO((S/B)⋅\log^2 U + C⋅(\log_B U)/ε)$.

	For the second part of the lemma, for each DFS pass we have either preloaded everything needed
	or $\cO(|P|/B)$ dominates, so we avoid the $\cO(C⋅\log_B U)$ part completely.

	During the DFS all blocks in which movements occur and all ancestors of their vertices are loaded,
	so we can update which of them stays preloaded after the operation.

	After reserving space for the new subtree, it will be build there,
	which is done by just another DFS not worsening the complexity.

	Before PMA updates, we need to traverse the rightmost branch of the left sibling of~the~newly inserted subtree.
	This costs just another $\cO((\log_B U)/ε)$ or is included in the preloaded memory.

	The other data used internally such as local variables and extendable vectors
	may be kept allocated between calls.
	We either keep it whole in the cache,
	or keep there only its beginnings of size $\cO((\log_B U)/ε)$
	and pay loading of the rest by $\cO(|P|/B)$.\

	The two fingers mentioned earlier can be kept in cache and updated at no cost.
\end{proof}

Following \cite{DFIO14},
we analyze the complexity of writes w.r.t. a given \textit{ephemeral} algorithm running in~the~persistent memory,
which executes a sequence of writes that would require $T(M',B')$ block transfers when executed ephemerally
on an initially empty cache with size $M'$ and block size $B'$.
We lower both the cache and block size of the ephemeral algorithm and with each smaller \textit{ephemeral} block loaded by the algorithm,
we load also the necessary part of the ST-tree and keep it in cache as long as the ephemeral block is kept there.
Assuming these data to be preloaded we calculate the remaining required block transfers.

We express the complexity as the sum of two terms:
The first one represents the memory described in the previous paragraph
and the strategy for loading and evicting these full-sized blocks is derived from the ephemeral algorithm.
The second term represents the remaining blocks needed by persistent array,
which are loaded on demand and usually are not kept in~the~cache for a long time.

\begin{lemma}[ephemeral memory]\label{lem:writes_ephemeral_io}
Let us load for each ephemeral block in memory the~leaves representing the present state of all ephemeral memory cells contained in the block,
whole %
decomposition subtrees of height between $\log_3 B^{1-ε}$ and $\log_3 B$ containing the leaves,
space after the subtrees as if they were ternary
and other $Θ(B⋅(\log_B U)/ε)$ of space before and after them.
Considering all full-sized blocks containing all the previous,
we also load all ancestors of all vertices contained in these blocks.
We note that all the subtrees have the~same height and each leaf is contained in just one of them.

Managing the memory controlled by the ephemeral algorithm then requires
\[
	\cO\left(T\left(\frac{εcM}{B^{1-\frac{1-ε}{\log 3}}\log_B U}, B^\frac{1-ε}{\log 3}\right)(\log_B U)/ε\right)
\]
block transfers during a sequence of writes with ephemeral I/O complexity $T(M',B')$,
where $0<ε≤1/2$ and $c$ is a sufficiently small constant,
using $1/3$ of the real cache.
\end{lemma}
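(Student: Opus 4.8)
The plan is to simulate the ephemeral algorithm one block at a time, with ephemeral block size $B' := B^{(1-\varepsilon)/\log 3}$, equivalently $B' = 2^{\log_3 B^{1-\varepsilon}}$. Each time the ephemeral algorithm loads one of its blocks I would load into the real cache the entire \emph{bundle} of full-sized blocks described in the statement and keep it resident for exactly as long as the ephemeral algorithm keeps that ephemeral block; when the ephemeral algorithm evicts a block, its bundle becomes evictable. Since this is a legal caching policy for the real cache, the number of real block transfers it causes upper-bounds the cost of the ideal policy, and it is at most $T(M',B')$ (the number of ephemeral transfers) times the number of full-sized blocks in one bundle. What remains is then: (i) to relate $B'$ to the structure of $\D$; (ii) to bound one bundle by $\cO((\log_B U)/\varepsilon)$ full-sized blocks; and (iii) to fix $M'$ so that the bundles of all ephemeral blocks that are simultaneously resident fit into $M/3$ real cells.

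For (i) I would use that the present version of the array is stored along the binary (original-child) structure of the ST-tree, so the $B'$ cells of an ephemeral block are $B'$ consecutive present leaves, which span an induced binary subtree of height $\log_2 B' + \cO(1) = \log_3 B^{1-\varepsilon} + \cO(1)$. Heights of nested decomposition subtrees along a branch of $\D$ near the leaves grow by a factor of at most $1/(1-\varepsilon) \le 2$ per step, because a bottom child of a decomposition subtree of height $h$ has height $h - \lfloor \varepsilon h\rfloor \ge (1-\varepsilon)h$; hence $\D$ has a level of subtrees of height in $[\log_3 B^{1-\varepsilon}, \log_3 B]$ that reach the bottom level, and these subtrees partition the present leaves. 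Each ephemeral block therefore meets at most two of them, which are the decomposition subtrees loaded in its bundle.

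For (ii) I would fix one ephemeral block and one of the (at most two) subtrees $R$ in its bundle. As $\T$ is at most ternary, $R$ has at most $3^{h(R)} \le 3^{\log_3 B} = \cO(B)$ vertices and, by the decomposition, occupies $\cO(|R|) = \cO(B)$ consecutive cells; reserving room for $R$ to grow into a full ternary tree of the same height is still $\cO(B)$ cells, i.e.\ $\cO(1)$ full-sized blocks. Adjoining the $\Theta(B\cdot(\log_B U)/\varepsilon)$ cells of padding on either side yields one contiguous memory part $P$ with $|P| = \Theta(B\cdot(\log_B U)/\varepsilon)$, so $|P|/B = \Theta((\log_B U)/\varepsilon)$ full-sized blocks. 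Loading every vertex of $P$ together with all of its ancestors is precisely the DFS traversal analysed in Theorem~\ref{thm:dfs_io}, costing $\cO(|P|/B + (\log_B N)/\varepsilon + 1)$ full-sized blocks; since $N = \cO(U\log U)$ we have $\log_B N = \cO(\log_B U)$, so the whole bundle is $\cO((\log_B U)/\varepsilon)$ full-sized blocks (the $\cO(1)$ blocks of the ordinary present array that hold the $B'$ cells are absorbed). Multiplying by the number of ephemeral transfers gives the claimed total $\cO(T(M',B') \cdot (\log_B U)/\varepsilon)$.

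For (iii), the ephemeral cache of size $M'$ holds at most $M'/B'$ ephemeral blocks; counting each bundle with multiplicity (over-counting blocks shared by two bundles only inflates the bound) their union occupies at most $(M'/B') \cdot \Theta(B\cdot(\log_B U)/\varepsilon)$ real cells, which is at most $M/3$ once we set $M' = \varepsilon c M / (B^{1-(1-\varepsilon)/\log 3}\log_B U) = \varepsilon c M B' / (B\log_B U)$ for a sufficiently small constant $c$ (this needs the real cache to hold $\Omega((\log_B U)/\varepsilon)$ blocks, which is the standing assumption). I would close by noting that whenever a \corobts{} update relocates vertices it modifies the blocks of the affected bundles in place, and by Lemma~\ref{lem:subtree_insertion} this is paid for separately --- its hypothesis that those bundles are preloaded is exactly what the present lemma guarantees --- so it does not enter this count. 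I expect step (i) to be the only genuinely delicate point: there one must verify that the $\varepsilon$-parametrised vEB decomposition really offers a level of subtrees whose height is sandwiched between $\log_3 B^{1-\varepsilon}$ and $\log_3 B$ and which reaches the leaves, and this is where $\varepsilon \le 1/2$ (i.e.\ $1/(1-\varepsilon) \le 2$) is used; the remaining steps are routine block counting on top of Theorem~\ref{thm:dfs_io}.
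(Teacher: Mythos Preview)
Your proposal is correct and follows essentially the same route as the paper: both arguments set $B'=B^{(1-\varepsilon)/\log 3}$, bound each ephemeral block's bundle by $\cO((\log_B U)/\varepsilon)$ full-sized blocks via Theorem~\ref{thm:dfs_io}, and then fix $M'$ so that the resident bundles fit in $M/3$. Your step (i) is more explicit than the paper (which treats the existence of a decomposition level of height in $[\log_3 B^{1-\varepsilon},\log_3 B]$ as part of the statement's hypothesis), and your closing remark about relocations being paid by Lemma~\ref{lem:subtree_insertion} is an extra clarification, but neither changes the argument.
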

\begin{proof}
	A ternary tree of the specified height fits into $\cO(1)$ blocks of memory
	and contains at~least $Ω(2^{\log_3 B^{1-ε}}) = Ω(B^{(1-ε)/\log 3})$ present leaves.
	One ephemeral block is then stored in~$\cO(1)$ of these ternary trees, which occupy $\cO(1)$ full-sized blocks.

	By Theorem~\ref{thm:dfs}, all vertices stored within a block, $Θ(B⋅(\log_B U)/ε)$ vertices before and after the block
	and all their ancestors occupy $\cO(1 + (\log_B U)/ε) = \cO((\log_B U)/ε)$ blocks assuming $U>B$.
	This is all we need to have loaded for one ephemeral block.

	In order to keep in cache all ephemeral blocks which are supposed to be there, the needed reserved cache size is
	\[
		\frac{εcM}{B^{1-\frac{1-ε}{\log 3}}\log_B U} ⋅
		\frac{1}{B^\frac{1-ε}{\log 3}} ⋅
		\cO((\log_B U)/ε) ⋅
		B
		= M/3
	\]
	for a sufficiently small constant $c$.

	Having stored all these blocks in cache, we can freely move our finger between the~leaves at~no~cost.
	If the ephemeral algorithm evicts a block by another,
	we just replace $\cO((\log_B U)/ε)$ full-sized blocks in cache.
\end{proof}

\begin{lemma}[single write I/O]\label{lem:writes_rest_io}
The amortized I/O complexity of a single write operation is
\[
	\cO\left(\frac{\log^3 U}{εB^\frac{1-ε}{\log 3}}\right)
\]
under the assumption that
the ephemeral block to be written into, all its contemporary parts mentioned in Lemma~\ref{lem:writes_ephemeral_io}
and $\cO((\log_B U)/ε)$ blocks shared by all PMA updates from Lemma~\ref{lem:subtree_insertion} are already loaded in the cache.

After the operation, the set of blocks required to be loaded by Lemma~\ref{lem:writes_ephemeral_io} might have changed,
because present state is represented by different vertices and some vertices might have been moved by PMA.
Loading missing blocks is also included in the complexity of write operation.

\end{lemma}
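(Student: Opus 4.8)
The plan is to decompose one write into its elementary actions, note that under the stated preloading all of them except one are free, and charge the remaining one — the single node expansion triggered by the write — using the amortized frequency bound of Lemma~\ref{lem:pa_3}. First I would spell out a write: after the target ephemeral cell and its \corobts{}-neighbourhood from Lemma~\ref{lem:writes_ephemeral_io} are in cache, a write (i) walks the ``last written'' finger up to the deepest ancestor it shares with the target and back down to the target cell's present leaf $\ell$; (ii) writes the new value into $\ell$, making $\ell$ and possibly some ancestors full; (iii) finds the bottommost non-full ancestor $v$ of $\ell$, whose children have some height $h$, and \emph{expands} it — reserves a fresh binary subtree $R$ of height $h$ above $v$'s full child $f$ by one \corobts{} subtree insertion, copies into each leaf of $R$ the present value of its cell (read from the matching leaf of $f$), and closes the rectangles of $f$'s subtree; (iv) if no such $v$ exists, instead closes and compresses the top ST-tree and builds a new top \corobts{} above it; (v) appends the write to the log and refreshes the two fingers. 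By hypothesis $\ell$, the decomposition subtree around it of height in $(\log_3 B^{1-ε},\log_3 B]$, all root-path ancestors of $\ell$, the $Θ(B⋅(\log_B U)/ε)$ slack around that subtree, and the $\cO((\log_B U)/ε)$ shared PMA blocks are cached, so steps (i), (ii), (v) and the book-keeping of (iii) touch only cached vertices and cost no extra I/O; step (iv) happens once per $U$ writes and by Lemma~\ref{lem:pa_10} together with Lemma~\ref{lem:build_io} amortizes to $\cO((\log U)/B^{(1-ε)/\log 3})$ per write.

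So the real cost sits in step (iii). The new subtree $R$ has $S=Θ(2^h)$ vertices and, by Lemma~\ref{lem:pa_2}, is stored in $C=\cO(1+2^h/B^{(1-ε)/\log 3})$ blocks, and the value copy and the closing of $f$'s rectangles are two scans of $\cO(C)$ blocks. For the subtree insertion I would split on $h$. If $h\le\log_3 B^{1-ε}$, then $v$ lies inside the preloaded decomposition subtree around $\ell$, the whole subtree of $v$ together with the $\cO(B^{(1-ε)/\log 3})\subseteq\cO(B⋅(\log_B U)/ε)$ cells reserved for $R$ sits inside that subtree and its slack, and the relevant ancestors are cached, so the preloaded variant of Lemma~\ref{lem:subtree_insertion} applies: the insertion costs $\cO((S/B)\log^2 U)=\cO((2^h/B)\log^2 U)$ and here $C=\cO(1)$. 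If $h>\log_3 B^{1-ε}$, the general variant of Lemma~\ref{lem:subtree_insertion} gives $\cO((S/B)\log^2 U+C⋅(\log_B U)/ε)=\cO((2^h/B)\log^2 U+(2^h/B^{(1-ε)/\log 3})⋅(\log_B U)/ε)$.

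Then I would amortize over all writes. By Lemma~\ref{lem:pa_3} a node whose expansion produces a height-$h$ subtree gains its third child $\cO(1/2^h)$ times per write in an amortized sense, and $h$ ranges over $1,\dots,\cO(\log U)$. Weighting the per-expansion cost above by $1/2^h$ and summing over $h$: the $\cO((2^h/B)\log^2 U)$ part contributes $\cO((\log^3 U)/B)$, the $\cO((2^h/B^{(1-ε)/\log 3})⋅(\log_B U)/ε)$ part contributes $\cO((\log U⋅\log_B U)/(εB^{(1-ε)/\log 3}))$, and the $\cO(C)$ part contributes $\cO(1+(\log U)/B^{(1-ε)/\log 3})$. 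Using $B^{(1-ε)/\log 3}\le B$, $\log_B U\le\log U$ and $1/ε\ge1$, each of these is $\cO((\log^3 U)/(εB^{(1-ε)/\log 3}))$; the stray $\cO(1)$ is within budget whenever $B^{(1-ε)/\log 3}=\cO(\log^3 U)$, and otherwise is a single one-time load amortized over the many operations it serves while everything stays cached. Combining with step (iv) gives the claimed amortized bound.

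Finally, re-establishing the preload of Lemma~\ref{lem:writes_ephemeral_io} after the write needs no further I/O: the target cell's present leaf has moved into $R$, which together with its ancestors was visited, hence loaded, during the subtree insertion and the value copy, and by the second part of Lemma~\ref{lem:subtree_insertion} the region around $v$ — now holding $R$ — its ancestors and the shared PMA blocks are updated and kept cached; for ephemeral blocks other than the one written into, Lemma~\ref{lem:writes_ephemeral_io} requires nothing since those blocks are not in memory. I expect the main obstacle to be precisely the small-$h$ bookkeeping of step~(iii): confirming that Lemma~\ref{lem:writes_ephemeral_io}'s preloaded region around $\ell$ really contains everything the preloaded variant of Lemma~\ref{lem:subtree_insertion} asks for — the full subtree of the parent $v$, the cells reserved for the new subtree $R$, and the $Θ(B⋅(\log_B U)/ε)$ slack — which reduces to locating $v$ and the freshly reserved bottom tree $R$ inside the decomposition subtree of height $\le\log_3 B$ containing $\ell$ and its slack.
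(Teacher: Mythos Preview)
Your proposal is correct and follows essentially the same route as the paper's proof: both split the expansion cost at the threshold $h=\log_3 B^{1-\varepsilon}$, invoke the preloaded versus general variants of Lemma~\ref{lem:subtree_insertion} respectively, weight by the $\cO(1/2^h)$ frequency of Lemma~\ref{lem:pa_3}, sum over $h\le\log U$, and handle the top-tree turnover via Lemma~\ref{lem:pa_10}. Your presentation is more explicit about the free steps (i), (ii), (v) and about re-establishing the preload afterwards, and you keep the three cost contributions separate rather than collapsing them into the paper's single uniform summand $(2^j/B^{(1-\varepsilon)/\log 3})\cdot(\log^2 U)/\varepsilon$, but the substance is the same; the small-$h$ bookkeeping you flag as the main obstacle is exactly the point the paper also leaves at the level of ``by Lemma~\ref{lem:writes_ephemeral_io} it is whole preloaded''.
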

\begin{proof}
	We first consider the common case,
	where
	the top ST-tree contains less than $U$ points.
	During write operation, one of the ancestors of the present leaf corresponding to~the~cell we are writing to
	has to be expanded, which involves insertion of a new subtree under the~node.
	We calculate the amortized complexity of the write as the sum of its contributions to~all~the~ancestors:
	\[
		\cO\left(\sum_{j=1}^{\log U} \frac{1}{2^j} ⋅ \frac{2^j}{B^\frac{1-ε}{\log 3}} ⋅ (\log^2 U)/ε\right)
		=\cO\left(\frac{\log^3 U}{εB^\frac{1-ε}{\log 3}}\right).
	\]

	The sum is over the heights of subtrees rooted at the nodes for expansion.
	The first fraction in~the~sum divides the cost of the expansion into $2^h$ insertions,
	which has to occur beforehand by Lemma~\ref{lem:pa_3}.
	The rest is the cost of the expansion.
	We distinguish two cases.

	In the first case, the second fraction is at least 1.
	The size of subtree to be inserted is $2^{j-1}$
	and by Lemma~\ref{lem:pa_2} it occupies $\cO(2^{j-1}/B^{(1-ε)/\log 3} + 1) ⊆ \cO(2^j/B^{(1-ε)/\log 3})$ blocks.
	By Lemma~\ref{lem:subtree_insertion},
	the complexity of subtree insertion is
	$\cO((2^{j-1}/B)⋅\log^2 U + (2^j/B^{(1 - ε)/\log 3})⋅(\log_B U)/ε) ⊆
	\cO((2^j/ B^{(1-ε)/\log 3}) ⋅ (\log^2 U)/ε)$.
	We further need to populate the inserted subtree with the~values stored in the corresponding binary subtree
	representing the previous present state of~the~ephe\-meral memory and close rectangles in that subtree.
	The old subtree is stored in the same number of blocks as the new one.

	In the second case, the second fraction is less than 1,
	the subtree of the node to be expanded has height less than $\log_3 B^{1-ε}$
	and so by Lemma~\ref{lem:writes_ephemeral_io} it is whole preloaded in the cache
	along with $Θ(B⋅(\log_B U)/ε)$ cells around it and the corresponding ancestors.
	By Lemma~\ref{lem:subtree_insertion}, we need to load only additional
	$\cO((2^{j-1}/B)⋅\log^2 U) ⊆ \cO((2^j/ B^{(1-ε)/\log 3}) ⋅ \log^2 U)$ blocks.
	Both the old and new trees are part of the preloaded memory, so we can freely walk through them.

	If the top ST-tree already contained $U$ points,
	we need to create new ST-tree.
	As this occurs once per $U$ writes,
	by Lemma~\ref{lem:pa_10} the amortized cost is $\cO((\log U)/B^{(1-ε)/\log 3})$.
\end{proof}

Resizing the array by doubling its size is not analyzed here
as it is not different to~the~original construction.
The final I/O complexity of writes is as follows:

\begin{theorem}[write I/O]\label{thm:pers_array_io}
	Let $T(M',B')$ be the ephemeral I/O complexity of a sequence of~writes
	with an initially empty cache of size $M'$ and block size $B'$.
	Then the total amortized I/O complexity of the sequence of $k$ writes into the persistent array is
	\[
		\cO\left(T\left(\frac{εcM}{B^{1-\frac{1-ε}{\log 3}}\log_B U}, B^\frac{1-ε}{\log 3}\right)(\log_B U)/ε
		+ k⋅\frac{\log^3 U}{εB^\frac{1-ε}{\log 3}}\right)
	\]
	for a sufficiently small constant $c$.
	If each cell is written to at most once during the sequence, the complexity is
	\[
		\cO\left(\frac{1}{ε}⋅T\left(\frac{εcM}{B^{1-\frac{1-ε}{\log 3}}\log U}, \frac{B^\frac{1-ε}{\log 3}}{\log B}\right)\log_B U⋅\log^2 U\right).
	\]
\end{theorem}
\begin{proof}
	The first complexity follows directly from Lemma~\ref{lem:writes_ephemeral_io} and Lemma~\ref{lem:writes_rest_io}.
	We get the~second one by bounding $k$ by the number of preloaded cells with lowered ephemeral block and cache size by another factor of $\log B$:
	\[
		k ∈ \cO\left(T\left(\frac{εcM}{B^{1-\frac{1-ε}{\log 3}}\log U}, \frac{B^\frac{1-ε}{\log 3}}{\log B}\right)⋅\frac{B^\frac{1-ε}{\log 3}}{\log B}\right)
	\]
\end{proof}

\bibliography{bibliography}

\appendix

\section{Algorithms}
\label{sect:appendix_algorithms}

\begin{myalg}{
	Initializing a new instance of the data structure
	by building $a$-ary tree of~height~$H$.
	See also Algorithms~\ref{alg:PMABatchUpdate}, \ref{alg:BuildSubtree} and~\ref{alg:NewSubtreeIntervalsSizes}.
}{alg:Init}
\fn Init (integer $H$)
	$h[] ←$ new array of size $H$
	$h[0]=H$
	$i ← 0$
	for $i ← \{0,\dots,H-1\}$:
		$h' ← h[i]$
		while $h'>1$:
			$h'' ← h'$
			$h'←\max(⌊εh'⌋,1)$
			$h[i+h'] ← h''-h'$
	array $N ←$ \fn NewSubtreeIntervalsSizes ($0$)
	array $M ←$ \fn PMABatchUpdate ((\INS, \nil, $N$))
	$v ←$ \fn BuildSubtree ($0, M$)
	store $v$ as root of the tree
\end{myalg}

\begin{myalg}{
	Inserting a new $a$-ary subtree as $c$-th child of vertex $v$.
	See also Algorithms~\ref{alg:PMABatchUpdate}, \ref{alg:BuildSubtree}, \ref{alg:NewSubtreeIntervalsSizes}, \ref{alg:SubtreeIntervalsBeginnings} and~\ref{alg:SubtreeIntervalsEnds}.
}{alg:InsertSubtree}
\fn InsertSubtree (vertex $v$, integer $c$)
	if $c > 0$:
		array $V ←$ \fn SubtreeIntervalsEnds ($(c-1)$-th child of $v$)
	else: # $c = 0$
		array $V ←$ \fn SubtreeIntervalsBeginnings ($c$-th child of $v$)
		for each $w ∈ V$: $w ←$ the vertex just before $w$ in memory
	array $N ←$ \fn NewSubtreeIntervalsSizes ($d(v) + 1$)
	array $O ←$ array of triples (type = \INS, vertex, count) of same size as $N$
	for $i ∈ \{0, ..., N$.size$ - 1\}: O[i] ←$ (\INS, $V[i]$, $N[i]$)
	array $M ←$ \fn PMABatchUpdate ($O$)
	$w ←$ \fn BuildSubtree ($d(v) + 1, M$)
	add $w$ as $c$-th child of $v$
\end{myalg}

\begin{myalg}{
	Removing subtree rooted at $c$-th child of vertex $v$.
	See also Algorithms~\ref{alg:PMABatchUpdate}, \ref{alg:SubtreeIntervalsBeginnings} and~\ref{alg:SubtreeIntervalsEnds}.
}{alg:RemoveSubtree}
\fn RemoveSubtree (vertex $v$, integer $c$)
	$U ←$ \fn SubtreeIntervalsBeginnings ($c$-th child of $v$)
	$V ←$ \fn SubtreeIntervalsEnds ($c$-th child of $v$)
	remove $c$-th child from $v$
	array $O ←$ array of triples (type = \RM, vertex, vertex) of same size as $U$
	for $i ∈ \{0, ..., U.$size$ - 1\}: O[i] ←$ (\RM, $U[i]$, $V[i]$)
	\fn PMABatchUpdate ($O$)
\end{myalg}

\begin{myalg}{
	Reserving or disposing space in PMA, multiple such operations may be performed by one call.
	We are given an array $O$ of operations to be performed.
	Each operation is a triple of its type (\INS{} or \RM)
	and for \INS{} one existing vertex (or \nil) and number of~vertices to~be inserted after it,
	for \RM{} the first and last vertex of interval to be removed.
	All vertices for removal are supposed to be already disconnected from the tree.
	The method moves some vertices within~$\P$ (or even resizes it)
	and recalculates affected pointers between vertices using \fnn{RecalculatePointers}.
	It returns two-dimensional array $M$ of pointers to new vertices in $\P$;
	$M[i][j]$~points to the $j$-th new vertex after the~vertex of $i$-th \INS{} operation.
	See also Algorithms~\ref{alg:PMAGetIntervals}, \ref{alg:PMACalcNewPositions} and~\ref{alg:RecalculatePointers}.
}{alg:PMABatchUpdate}
\fn PMABatchUpdate (array $O$ of operations)
	# transform pointers to indices in $\P$
	array $O' ← O$ as array of triples ($.t$ = \INS, $.i$, $.n$) and ($.t$ = \RM, $.i$, $.j$)
	# create array $I$ of intervals for vertex redistribution and prepare $\P'$
	$I, m ←$ \fn PMAGetIntervals ($O'$)
	$\P' ← \P$ (as reference) if $m = \P$.size else new array of $m$ empty vertices
	# write new positions into vertices, create array $M$ of references to new vertices
	$M ←$ empty array with same capacity as $O'$ of arrays of vertices in $\P'$
	\fn PMACalcNewPositions ($O', I, \P', M$)
	# create array $B$ of updated children pointers (preallocated, size as $\P$; or extendable)
	$B ←$ empty array of triples (vertex, child index, child vertex)
	for intervals ($l, r, \_$) in $I$:~~~\fn RecalculatePointers ($\P[l..r], B$)
	# apply PMA movements
	for each ($l, r, \_$) in $I$:
		# backward scan: move entries of $\P$ to the end of interval, clear removed
		$j ← r$
		for i in ($r, \dots, l$):
			if $\P[i]$ is non-empty:
				if $P[i]$ contains its new position:
					move $\P[i]$ to $\P[j]$
					$j ← j - 1$
				else:~~~make $\P[i]$ empty
		# forward scan: move entries to their final positions
		while $j < r$:
			$j ← j + 1$
			move $\P[j]$ to the position stored inside it and clear that value
	if $\P ≠ \P'$:
		delete array $\P$
		$\P ← \P'$  (as reference)
	# apply changed pointers from $B$
	for each (vertex $v$, integer $c$, vertex $w$) in $B$:
		set $c$-th child of $v$ to $w$ (in $\P$)
		clear $(v, c, w)$ item in $B$
	# initialize new vertices
	for each $v$ in each array of $M$:~~~initialize $v$
	return $M$
\end{myalg}

\begin{myalg}{
	Part of \fnn{PMABatchUpdate}.
	It takes the array of operations $O'$ with indices to $\P$ instead of pointers;
	as defined in \fnn{PMABatchUpdate}.
	It finds memory intervals whose vertices have to be uniformly redistributed.
	It returns an array $I$ and needed size of $\P$.
	The~array~$I$ consists of triples, each containing indices of first and last vertex of the interval
	and number of vertices which will be in that interval after performing all operations in~$O'$.
	If the needed size of $\P$ is different to the current one, then $I$ contains the only interval referring to whole $\P$.
}{alg:PMAGetIntervals}
\fn PMAGetIntervals (array $O'$)
	$I ←$ empty array of triples (first index $.l$, last $.r$, count of vertices $.n$), capacity as $O'$
	$o ← 0$                            # index in $O'$
	while $o < O'$.size
		# init scan in PMA block where current operation $O'[o]$ occurs
		$m ← ⌈⌈\log(\P.\f{size} + 1)⌉⌉$           # nodes' capacity on current (now leaf) level; power of two
		$\rho_{\min},\;\rho_{\max} ← 1/8,\; 1$        # density bounds on current level, top-level is $(2/8, 7/8)$
		$\rho_\Delta ← 1/(8\log_2(\P.\f{size}/m))$  # difference of $\rho_{\min},\rho_{\max}$ between levels, NaN on init
		if $O'[o] = \INS{}$ and $O'[o].i = -1$: # special case: inserting at the beginning
			$b ← 0$;~~$n ← O'[o].n$;~~$o ← o + 1$
		else:
			$b ← ⌊O'[o].i / m⌋$            # current node number (of capacity m) on the level
			$n ← 0$                             # number of found vertices in the node
		$l,\; r ← b ⋅ m,\; (b+1)⋅m-1$  # first and last index of current node
		$i ← l;~~d ← 1$                         # current index within node and scan direction (+1 or -1)
		repeat:
			# scan in direction $d$
			while $l ≤ i ≤ r$:
				if $\P[i]$ is non-empty:
					$n ← n + 1$
					if $d = 1$ and $o < O'.\f{size}$: # take changes in $O'$ into account in forward scan
						if $O'[o].t = \RM{}$ and $O'[o].i ≤ i ≤ O'[o].j$:
							$n ← n - 1$
							if $i = O'[o].j$:~~~$o ← o + 1$
						if $O'[o].t = \INS{}$ and $O'[o].i = i$:
							$n ← n + O'[o].n$;~~$o ← o + 1$
				$i ← i + d$
			# determine new scan direction, merge with previous interval if possible
			repeat:
				$d ← -1 ^d$
				if $d = 1$ or $I$ is empty or $I.\f{last}.r + 1 ≠ l$:~~~break
				$l ← I.\f{last}.l;~~n ← n + I.\f{last}.n;~~m ← 2m;~~\;b ← ⌊b/2⌋;$~~delete $I.\f{last}$
				$\rho_{\min},\; \rho_{\max} ← \rho_{\min}+\rho_\Delta,\; \rho_{\max}-\rho_\Delta$
			# check density
			if $\rho_{\min} ≤ n / m ≤ \rho_{\max}$:~~~$I$.append($(l, r, n)$);~~break
			# check if total rebuild is needed; get top-level density in (3/8, 6/8) in that case
			if $m = \P.\f{size}$:~~~I.append($(l, r, n)$);~~return $I,\; ⌈⌈N⋅4/3⌉⌉$
			# prepare for scan continuation
			if $d = 1$:~~~$i ← r + 1;~~r ← r + m$
			else:~~~$i ← l -1;~~l ← l - m$
			$m ← 2m;~~b ← ⌊b/2⌋$;~~$\rho_{\min},\; \rho_{\max} ← \rho_{\min}+\rho_\Delta,\; \rho_{\max}-\rho_\Delta$
	return $I,\; \P.\f{size}$
\end{myalg}

\begin{myalg}{
	Part of \fnn{PMABatchUpdate}.
	It computes new positions of vertices after performing all operations.
	It takes the array of operations $O'$ with indices to $\P$ instead of pointers,
	array of~intervals $I$ with resulting number of vertices in them
	and target PMA memory $\P'$, which usually equals $\P$;
	all~as~defined in \fnn{PMABatchUpdate}.
	It writes pointers to new positions of vertices into them
	and writes pointers to target positions of new vertices into $M$,
	which is then returned by~\fnn{PMABatchUpdate}.
}{alg:PMACalcNewPositions}
\fn PMACalcNewPositions (array $O'$, array $I$, PMA memory $\P'$, array $M$)
	$o ← 0$
	for each $(l, r, n)$ in $I$:
		$m ← r - l + 1$
		if $m = \P$.size:~~~$m ← \P'.size$
		$j ← 0$
		if $o < O'$.size and $O'[o] = \INS{}$ and $O'[o].i = -1$:
			# special case: inserting at the beginning
			M.append(empty array with capacity $O'[o].n$ of vertices)
			repeat $O'[o].n$ times:
				$M.\f{last}$.append($\P'[l + j ⋅ m / n]$) (as reference)
				$j ← j + 1$
			$o = o + 1$
		for $i$ in $(l, ..., r)$:
			if $\P[i]$ is non-empty:
				if $o < O'$.size and $O'[o].t = \RM{}$ and $O'[o].i ≤ i ≤ O'[o].j$:
					if $i = O'[o].j$:~~$o ← o + 1$
				else:
					set pointer to new position of $P[i]$ to point to $\P'[l + j ⋅ m / n]$
					$j ← j + 1$
				if $o < O'$.size and $O'[o].t = \INS{}$ and $O'[o].i = i$:
					$M$.append(empty array with capacity $O'[o].n$ of vertices)
					repeat $O'[o].n$ times:
						$M.\f{last}$.append($\P'[l + j ⋅ m / n]$) (as reference)
						$j ← j + 1$
					$o = o + 1$
\end{myalg}

\begin{myalg}{
	Pointer recalculation as part of PMA data movements.
	We are given memory interval $P⊆\P$ of vertices to be moved
	with pointers to new positions within the vertices
	and we are supposed to append to preallocated array $B$ which pointers to children have to be changed and how.
	After returning from this procedure,
	vertices in $P$ will be moved to their new positions
	and then pointers will be updated according to~$B$.
}{alg:RecalculatePointers}
\fn RecalculatePointers (memory interval $P$ of $\P$, ref array $B$):
	# determine the leftmost and rightmost vertex of each level
	$V ←$ array of size $H$ of triples (vertex $.l ← \nil$, vertex $.r ← \nil$, bool $.m ← \fnn{false}$)
	$d_{\max} ← -1$
	for $v∈P$:
		if $V[d(v)].l = \nil$: $V[d(v)].l ← v$
		$V[d(v)].r ← v$
		$d_{\max} = \max(d_{\max}, d(v))$
	# depth-first search
	$v ←$ root of $\T$ # currently visited vertex
	$c ← \fnn{false}$  # whether to continue DFS from $v$
	$S ←$ empty stack of ancestors of $v$
	repeat:
		if $V[d(v)].m = \fnn{true}$:  # $v ∈ P$ and $V ≠$ root of $\T$
			$u ← S.\fnn{top}$ # parent of $v$
			$B.\fnn{append}$((new position of $u$, index of $v$ under $u$, new position of $v$))
		if $c = \fnn{true}$ and $d(v) = d_{\max}$:  # continue DFS by leaving finished levels
			while $v = V[d(v)].r$:  # go up
				$V[d(v)].m ← \fnn{false}$
				$d_{\max} = d_{\max} - 1$
				if $d_{\max} = -1$: return
				$v ← S.\fnn{pop}()$
			if $V[d(v)].m = \fnn{false}$: $c = \fnn{false}$  # no other vertices to visit nearby
		if $c = \fnn{true}$:  # continue DFS to the right
			$v' ←$ leftmost child of $v$, or $\nil$
			while $d(v) = d_{\max}$ or $v' = \nil$:  # go up
				$v' ←$ child of $S.\fnn{top}$ which is right sibling of $v$, or $\nil$
				$v ← S.\fnn{pop}()$
			# step down
			$S.\fnn{push}(v)$
			$v ← v'$
		else:  # navigate to the next leftmost unvisited vertex
			while $V[d_{\max}].l = \nil$:  # identify the unvisited vertex
				$d_{\max} = d_{\max} - 1$
				if $d_{\max} = -1$: return
			while $V[d_{\max}].l$ is not in subtree of $v$: $v←S.$\fnn{pop()}  # go up
			# step down
			$S.\fnn{push}(v)$
			$v ←$ child of v on branch with $V[d_{\max}].l$
			if $v = V[d(v)].l$: $V[d(v)].m = \fnn{true}$
			if $d(v) = d_{\max}$: $c=\fnn{true}$  # the vertex found
\end{myalg}

\begin{myalg}{
Building a new subtree with root at depth $d_0$ in a reserved space given by~$M$.
The~vertex~$M[i][j]$ is the $j$-th one in the $i$-th memory interval.
The subtree is being built via DFS and its root is returned to be connected to its new parent.
}{alg:BuildSubtree}
\fn BuildSubtree (integer $d_0$, array $M[][]$ of vertices in $\P$)
	array $I[d_0..H-1], J[d_0..H-1]$  # positions in $M$ for individual levels
	array $K[d_0..H-1]$               # number of times vertices on last branch were visited
	# initialize $I, J, K$
	$i ← 0$
	$d ← d_0$
	while $d < H$:
		$J[d] ← 0$
		repeat $h[d]$ times:
			$I[d] ← i$
			$d    ← d + 1$
		$i ← i + 1$
	$K[d_0] ← 0$
	# DFS building the new subtree
	$d ← d_0$
	while $d ≥ d_0$:
		$K[d] ← K[d] + 1$
		if $K[d] = 1$:
			# first visit, init vertex data
			$d(M[I[d]][J[d]]) ← d$
		if $d = H - 1$:
			# leaf, go up
			$d ← d - 1$
		else if $K[d] = 1$:
			# first visit, non-leaf, init some vertices on the leftmost branch (green arrows)
			$h' ← h[d]$
			while $h' > 1$:
				$h' ←$ $\max(⌊εh'⌋,1)$
				$J[d + h'] ⇐ J[d] + (a^{h'} - 1) / (a - 1)$  # = $J[d] + \sum_{i=0}^{h'-1} a^i$
			$K[d + 1] ← 0$
			add vertex $M[I[d + 1]][J[d + 1]]$ as first child of vertex $M[I[d]][J[d]]$
			$d    ← d + 1$
		else if $K[d] ≤ a$:
			# other but not last visit, init next child (red arrows)
			$J[d + 1] ← J[d + h[d + 1]] + 1$
			$K[d + 1] ← 0$
			add vertex $M[I[d + 1]][J[d + 1]]$ as next child of vertex $M[I[d]][J[d]]$
			$d    ← d + 1$
		else if $K[d] = a + 1$:
			# last visit, go up
			$d ← d - 1$
	return $M[0][0]$
\end{myalg}

\begin{myalg}{
	Calculating sizes of (possibly isolated) memory intervals needed to store
	a~new $a$-ary (sub)tree with root at depth $d$.
	It returns an array of number of vertices in each memory interval.
	The array have the same indexing as arrays
	from \fnn{SubtreeIntervalsBeginnings} and \fnn{SubtreeIntervalsEnds},
	which allows reserving space for a new sibling subtree of an existing one before or after its intervals.
}{alg:NewSubtreeIntervalsSizes}
\fn NewSubtreeIntervalsSizes (integer $d$)
	array $N[]$
	$n ← 1$  # number of subtrees in i-th memory interval
	while $d < H$:
		$N.\fnn{append}(n ⋅ (a^{h[d]} - 1) / (a - 1))$
		$n    ← n ⋅  a^{h[d]}$
		$d    ← d + h[d]$
	return $N$
\end{myalg}

\begin{myalg}{
	Calculating beginnings of (possibly isolated) memory intervals containing
	an~existing subtree rooted at vertex $v$.
	It returns an array of first vertices in each interval.
}{alg:SubtreeIntervalsBeginnings}
\fn SubtreeIntervalsBeginnings (vertex $v$)
	array $V[]$
	repeat:
		$V.\fnn{append}(v)$
		repeat $(h(v)-1)$ times:
			$v ←$ leftmost child of $v$
		break if $v$ is leaf
		$v ←$ leftmost child of $v$
	return $V$
\end{myalg}

\begin{myalg}{
	Calculating ends of (possibly isolated) memory intervals containing
	an~existing subtree rooted at vertex $v$.
	It returns an array of last vertices in each interval.
}{alg:SubtreeIntervalsEnds}
\fn SubtreeIntervalsEnds (vertex $v$)
	array $V[]$
	repeat:
		repeat $(h(v)-1)$ times:
			$v ←$ rightmost child of $v$
		$V.\fnn{append}(v)$
		break if $v$ is leaf
		$v ←$ rightmost child of $v$
	return $V$
\end{myalg}

\clearpage %

\end{document}